\documentclass[lettersize,journal,twoside]{IEEEtran}
\usepackage{amsmath,amssymb,amsfonts}
\allowdisplaybreaks[4]
\usepackage{array}
\usepackage[caption=false,font=normalsize,labelfont=sf,textfont=sf]{subfig}
\usepackage{textcomp}
\usepackage{stfloats}
\usepackage{url}
\usepackage{verbatim}
\usepackage{graphicx}
\usepackage{cite}
\usepackage{subfig}
\usepackage[linesnumbered, ruled]{algorithm2e}
\SetKwRepeat{Do}{do}{while}%
\usepackage[thmmarks,amsmath]{ntheorem}
\theoremseparator{:}

\newtheorem{lemma}{Lemma}
\newtheorem{proposition}{Proposition}
\theoremheaderfont{\it}
\theorembodyfont{\normalfont}
\theoremsymbol{\ensuremath{\Box}}
\newtheorem*{proof}{Proof}
\theoremheaderfont{\it}
\theorembodyfont{\normalfont}
\theoremsymbol{}
\newtheorem{remark}{Remark}
\usepackage{setspace}
\hyphenation{op-tical net-works semi-conduc-tor IEEE-Xplore}

\begin{document}

\title{Edge Information Hub: Orchestrating Satellites, UAVs, MEC, Sensing and Communications for 6G Closed-Loop Controls}

\author{Chengleyang Lei,~\IEEEmembership{Student Member,~IEEE}, Wei Feng,~\IEEEmembership{Senior Member,~IEEE}, Peng Wei,~\IEEEmembership{Member,~IEEE}, Yunfei Chen,~\IEEEmembership{Senior Member,~IEEE}, Ning Ge,~\IEEEmembership{Member,~IEEE}, and Shiwen Mao,~\IEEEmembership{Fellow,~IEEE}
	\thanks{This work was supported in part by the National Natural Science Foundation of China under Grant 62425110, Grant 62341110 and Grant U22A2002, in part by the National Key Research and Development Program of China under Grant 2020YFA0711301, in part by the Suzhou Science and Technology Project, and in part by the FAW Jiefang Automotive Co., Ltd.}
	\thanks{Chengleyang Lei, Wei Feng (corresponding author), and Ning Ge are with the Department of Electronic Engineering, State Key Laboratory of Space Network and Communications, Tsinghua University, Beijing 100084, China (email: lcly21@mails.tsinghua.edu.cn, fengwei@tsinghua.edu.cn, gening@tsinghua.edu.cn).}
	\thanks{Peng Wei is with the National Key Laboratory of Wireless Communications, University of Electronic Science and Technology of China, Chengdu 611731, China (e-mail: wppisces@uestc.edu.cn).}
	\thanks{Yunfei Chen is with the Department of Engineering, University of Durham, DH1 3LE Durham, U.K. (e-mail: yunfei.chen@durham.ac.uk).}
	\thanks{Shiwen Mao is with the Department of Electrical and Computer Engineering, Auburn University, Auburn, AL 36849, USA (e-mail: smao@ieee.org).}
}




\maketitle
\pagestyle{empty}
\thispagestyle{empty}

\begin{abstract}
An increasing number of field robots would be used for mission-critical tasks in remote or post-disaster areas. Due to the limited individual abilities, these robots usually require an edge information hub (EIH), with not only communication but also sensing and computing functions. Such EIH could be deployed on a flexibly-dispatched unmanned aerial vehicle (UAV). Different from traditional aerial base stations or mobile edge computing (MEC), the EIH would direct the operations of robots via sensing-communication-computing-control ($\textbf{SC}^3$) closed-loop orchestration. This paper aims to optimize the closed-loop control performance of multiple $\textbf{SC}^3$ loops, with constraints on satellite-backhaul rate, computing capability, and on-board energy. Specifically, the linear quadratic regulator (LQR) control cost is used to measure the closed-loop utility, and a sum LQR cost minimization problem is formulated to jointly optimize the splitting of sensor data and allocation of communication  and computing resources. We first derive the optimal splitting ratio of sensor data, and then recast the problem to a more tractable form. An iterative algorithm is finally proposed to provide a sub-optimal solution. Simulation results demonstrate the superiority of the proposed algorithm. We also uncover the influence of $\textbf{SC}^3$ parameters on closed-loop controls, highlighting more systematic understanding.
\end{abstract}

\begin{IEEEkeywords}
Closed-loop control, edge information hub (EIH), linear quadratic regulator (LQR), satellite, unmanned aerial vehicle (UAV).
\end{IEEEkeywords}

\section{Introduction}
Robots and various unmanned machines have great potential to help humans carry out dangerous and strenuous tasks in post-disaster or remote areas~\cite{robot1,robot2}. In general, the abilities of an individual robot are usually limited. For example, the sensors equipped on a robot could only detect its surrounding information. The onboard computers may malfunction due to the electronic components failure caused by the harsh conditions after disasters, including high temperatures and radiation~\cite{emergency1}. In such cases, the robots have to rely on external helpers to assist them with mission-critical control tasks. Such a helper should be integrated with sensing functionality for global environmental detection~\cite{sensor1}, computing functionality for sensor-data analysis and decision-making~\cite{computing1}, and communication functionality for delivering control commands~\cite{communication1}. 

Traditionally, the designs of the sensing, communication, and computing modules are relatively independent of each other, with little consideration of their relationships. However, during certain tasks, these components often collaborate closely to support robots. This suggests that the helper should be considered as a unified entity integrating sensing, communication, and computing functionalities. As an integrated center of the control-oriented information, we refer to the above helper as an edge information hub (EIH), which incorporates remote sensors, mobile edge computing (MEC) servers and communication modules. Note that, the terrestrial infrastructures may be unavailable during disasters. For more robust applications, the EIH can be deployed on a flexibly-dispatched unmanned aerial vehicle (UAV) in an on-demand manner~\cite{Feng2024_1}. In addition, the payload of an EIH is limited in practice, resulting in limited computing capability onboard. Accordingly, the satellite could be leveraged to offload some sensor data to the remote cloud. This leads to EIH-based satellite-UAV networks for control tasks. 

The EIH assists the field robots to accomplish control tasks in a closed-loop manner. Specifically, the remote sensors collect information of the controlled objects, and then the computing modules (comprising the MEC server and the remote cloud) process the sensor data for decision making and control commands. Next, the communication modules transmit the commands to the corresponding robots. Finally, the robots follow the commands to perform the tasks. This entire process is referred to as the sensing-communication-computing-control ($\textbf{SC}^3$) loops~\cite{wcl}. Different components of the $\textbf{SC}^3$ loops are coupled with each other. Therefore, different from the traditional system design focusing on communications only, it becomes more relevant to focus on the whole closed-loop performance and to design different components of the $\textbf{SC}^3$ loops with a systematic mindset. In addition, constrained by the payload of UAVs, both communication and computing resources on the EIH are limited~\cite{MEC1,MEC6,MEC7, Feng2024_2}. This makes it necessary to orchestrate the communication and computing resources to ensure that resource allocation aligns with control requirements. Motivated by these considerations, we investigate the joint communication and computing resource allocation problem in an EIH-empowered system serving multiple robots for their control tasks, with the aim of optimizing the closed-loop control performance.

\subsection{Related Works}
Closed-loop control utilizes the output of a dynamic system as the input of the controller, which forms a closed loop\cite{closed_loop}. It is an important research field in the control theory, as it can stabilize an unstable system and reduce sensitivity to disturbance\cite{closed_loop1}. Classical control theory, rooted in frequency domain techniques based on transfer functions, has been extensively investigated\cite{frequency_method1,frequency_method2}. In the late 1950s, researchers began to develop the modern control theory, employing the state variable approach\cite{modern_control1,modern_control2}. An important area within modern control theory is optimal control, which aims at seeking a control strategy that optimizes an objective function\cite{optimal_control1,optimal_control2}. In optimal control, the linear quadratic regulator (LQR) control cost is commonly used to measure the system state deviation and control input energy\cite{LQR1}. The optimal control strategy to minimize the LQR cost has been proven to be a linear strategy\cite{LQR2}. Nevertheless, most works in the control field assume that the communication limitations in the loops have a negligible impact on control performance\cite{control1}. However, the resources on the EIH are usually limited by the UAV payload and flight time. In such cases, it is necessary to take the limitations of communication and computing capability into consideration.

As an integration of computing and physical processes, cyber-physical systems (CPSs), which require computation, communication, control, and perception, have received great attention \cite{CPS1}. Some works have delved into communication design to meet high requirements in the CPSs~\cite{CPS2, CPS3, CPS4}. In particular, \cite{CPS2} proposed the design and implementation of a real-time high-speed wireless communication protocol called RT-WiFi, so as to provide deterministic timing at a high sampling speed in CPSs. The authors in \cite{CPS3} analyzed the options and configuration choices for the fifth generation (5G) network deployment in industrial CPSs. In \cite{CPS4}, an energy-efficient massive multiple-input multiple-output (MIMO) system was designed to reduce the power consumption and fabrication cost for the industrial CPSs. On the other hand, in order to effectively manage the diverse data with low latency, MEC has been investigated as a means of integration in the CPSs\cite{CPS5, CPS6}. The authors in \cite{CPS5} proposed a deep Q-network-based service placement algorithm, which minimizes the service response time by optimizing the service placement, resource allocation and workload scheduling. Ref. \cite{CPS6} investigated large-scale CPSs, and proposed an edge intelligent approach to minimize the service latency and maximize system lifetime. Although these works have provided valuable insights into $\textbf{SC}^3$ integration, further research is needed to comprehend the coupling relationships among the sensing, communication, computing, and control components in such systems. Exploring how the sensing, communication, and computing capabilities influence the overall control performance will be beneficial for a more effective $\textbf{SC}^3$ system design.

As a special type of CPSs, networked control systems (NCSs), wherein the control systems are connected through a communication network, have been recently studied due to their flexibility and maintainability~\cite{NCS,NCS1}. Many works have been conducted to investigate the impact of communication limitations on the control system stability from various aspects, including data rate~\cite{control1}, delay~\cite{NCS2}, packet loss~\cite{NCS3, NCS9}, and so on.  In \cite{control1}, the authors demonstrated that the control system can be stabilized only if the communication data rate exceeds its intrinsic entropy rate. Ref. \cite{NCS2} investigated an NCS consisting of clock-driven sensors and event-driven controllers and actuators, and analyzed the stability region plot with respect to the sampling rate and network-induced delay. Authors in \cite{NCS3} considered the logarithmic quantization and packet loss, and derived the stability condition. Ref. \cite{NCS9} further considered the packet loss and random delays in NCSs. A set of necessary and sufficient conditions for stabilizing the NCSs was proposed. On the other hand, some works designed control strategies contemplating communication limitations\cite{NCS4, Yue2013, NCS5,NCS8,NCS11}. A control strategy aiming at achieving good performance over an unreliable communication network affected by packet loss and delays was described in \cite{NCS4}, which uses the data packet frame to transmit control sequences. The authors in \cite{Yue2013} proposed an event-triggered control strategy that guaranteed stability with an $H_{\infty}$ norm bound, where the communication delay was considered. In \cite{NCS5}, the control strategy and transmit power policies were designed to minimize the weighted sum of the control cost and power cost, where the packet loss was considered. Ref. \cite{NCS8} investigated the tradeoff between the data rate and the control performance. The transmission and control strategies were proposed to minimize the sum of the communication cost and control cost function. Authors in \cite{NCS11} considered the packet loss and optimized control parameters under the system instability probability constraint. In addition, some works investigated the influence of communication indicators on control performance~\cite{LQR, NCS6,NCS12}. In \cite{LQR}, a lower bound of the minimum average data rate to achieve a certain LQR cost was presented. Authors in \cite{NCS6} investigated  connected and automated vehicles, where the impact of communication erasure channels on vehicle platoon formation and robustness was analyzed. Ref. \cite{NCS12} further investigated the tradeoff between the average data rate and control performance of NCSs, considering the transmission delay. All of these works are valuable for analyzing the $\textbf{SC}^3$ loops. However, most of these works only investigated one control loop from the perspectives of performance analysis or control strategy design, instead of resource allocation among multiple $\textbf{SC}^3$ loops.

Recently, some works have been conducted on the communication resource allocation among $\textbf{SC}^3$ loops considering the control performance. Constraints on the control performance have been considered in the resource allocation~\cite{NCS7,NCS10,NCS13}. The authors in \cite{NCS7}  studied the resource allocation problem in wireless control systems, where the bandwidth and transmit power were jointly optimized to maximize the spectral efficiency, under the control convergence rate constraint. Ref. \cite{NCS10} designed a frequency allocation policy to keep the overall control system stable. In \cite{NCS13}, the authors investigated a system identification problem in a wireless NCS. The transmit power and channel allocation were jointly optimized to maximize the communication throughput or minimize the power consumption, while guaranteeing the system identification performance. Unlike these works that focused on the communication process, in an $\textbf{SC}^3$ system, the communication process is only designed to facilitate control and it is the control performance that should be the main objective for optimization. Some studies investigated the communication resource allocation aiming at improving the control performance~\cite{NCS14,NCS15,wcl}. Ref. \cite{NCS14} maximized the ratio of the remained energy to LQR cost by optimizing the transmit power and time allocation. The authors in \cite{NCS15} extended the work in \cite{NCS14} by introducing reconfigurable intelligent surface (RIS) technology to assist communication in a wireless NCS, where the transmission power, time, beamforming matrix, and RIS reflecting coefficients were jointly optimized to maximize the ratio of the reliability performance to LQR cost. In our previous work \cite{wcl}, we formulated an LQR cost minimization problem that optimized the transmit power allocation. However, these works did not utilize MEC or consider the computing resource allocation to further reduce latency.

On the other hand, the UAV-aided MEC has been widely investigated as a potential technology to extend the coverage of computation service\cite{MEC3, MEC4, MEC8, MEC9}. The authors in \cite{MEC3} considered a UAV-aided MEC system, where the user data can be processed locally or offloaded to the MEC. The sum of delays was minimized by jointly optimizing the UAV trajectory, the ratio of offloading tasks, and the user scheduling variables. The authors in \cite{MEC4} further proposed a multi-agent reinforcement learning method to solve the resource allocation problem in an MEC- and UAV-assisted vehicular network. Ref. \cite{MEC8} considered the information security in the UAV-assisted MEC system, where the secure computation efficiency was maximized by optimizing the offloading decision and resource allocation based on deep reinforcement learning. The authors in \cite{MEC9} investigated a vehicular edge computing system, where a UAV was utilized to assist task offloading. A UAV-assisted vehicular task offloading problem was proposed to minimize vehicular task delay. Most existing works on the UAV-aided MEC focus on communication performance, such as latency or energy efficiency. However, as mentioned above, in $\textbf{SC}^3$ loops, our main concern is the overall closed-loop control performance, rather than the separate communication or computing performances. Therefore, the closed-loop performance is of more interest to the system design.

In summary, despite the above works, several outstanding challenges remain in the $\textbf{SC}^3$ system design. Firstly, the sensing, communication, computing, and control components in the $\textbf{SC}^3$ loops cooperate closely to accomplish common tasks, and they are coupled with each other. However, existing works mainly focus on the aforementioned four components separately, lacking a holistic consideration of the entire system. Secondly, to efficiently utilize the limited resources on the EIH and improve the overall control performance, the multi-domain resources, such as the transmit power, the communication rate, and the computing capability, should be jointly orchestrated. There is lack of work focusing on the control performance to concurrently address the joint allocation of multi-domain resources. Therefore, it is imperative to explore the joint allocation of the multi-domain resources with the objective of the control performance.

\subsection{Main Contributions}
Motivated by the above observations, in this paper, we investigate an EIH-empowered $\textbf{SC}^3$ system where an EIH assists multiple robots with their control tasks. The UAV-mounted EIH integrates the sensing, computing, and communication capabilities to assist the robots, and utilizes satellite as backhaul. We jointly optimize the communication and computing resources, as well as the splitting ratio of sensor data, to minimize the sum LQR cost of multiple loops. The optimization problem is a non-convex problem. We recast it to a more tractable form and propose an iterative algorithm to solve it. The main contributions are summarized as follows.
\begin{itemize}
	\item We conceive the sensing, communication, and computing modules as a unified entity that assists robots to accomplish tasks, referred to as an EIH. This allows us to focus on the coupling among the functions within the $\textbf{SC}^3$ loops. By comprehensively contemplating the influence of different components and orchestrating multi-domain resources, we can improve the overall closed-loop control performance efficiently.
	\item We investigate a UAV-mounted EIH, which is integrated with a remote sensor, an MEC server, and a communication module to assist multiple robots with their control tasks. The sensor data can be processed locally on EIH, offloaded to the cloud after pre-processing, or offloaded to the cloud without pre-processing. In order to explore the potential of closed-loop orchestration, we jointly optimize the splitting of sensor data and allocation of communication and computing resources. Specifically, we utilize the LQR cost to evaluate the overall control performance, and incorporate the minimum information entropy constraint to achieve a certain LQR cost.
	\item We formulate a sum LQR cost minimization problem, which jointly optimizes the splitting ratio of sensor data, the computing capability of MEC, the satellite-backhaul rate, and the transmit power from EIH to robots. The problem is non-convex.
	We derive the optimal splitting vector of sensor data, and accordingly recast the original problem to a more tractable form. Finally, we propose an iterative algorithm to solve the recast problem based on the successive convex approximation (SCA) method.
	\item We provide simulation results to show the superiority of the proposed closed-loop-oriented method over the traditional communication-oriented method. Additionally, we show how the sensing noise variance and the computing capability influence the LQR cost through simulation.
\end{itemize}

\subsection{Organization and Notation}
The rest of the paper is organized as follows. Section \ref{sec_system} introduces the system model, and formulates the optimization problem. In Section \ref{sec_algorithm}, we recast the original problem to a more tractable form and propose an iterative algorithm to solve it. Simulation results are provided in Section \ref{sec_simulation} with further discussions. Finally, Section \ref{sec_conclusion} concludes this paper.

Throughout this paper, lower case and upper case boldface symbols denote vectors and matrices, respectively. $\mathbb{R}^{n}$ represents the collection of all the $n$-dimensional real-number vectors, and $\mathbb{R}^{m\times n}$ represents the collection of all the $m\times n$ real-number matrices. $\mathbb{E}$ is the expectation operator. $\text{tr} \left(\cdot \right)$ and $\det \left(\cdot \right)$ denote the trace operator and the determinant operator, respectively.
      
\section{System Model and Problem Formulation}
\label{sec_system}
As shown in Fig. \ref{fig:system}, we consider an EIH-empowered $\textbf{SC}^3$ system, where $K$ field robots are assisted by the EIH to perform mission-critical tasks. The UAV-mounted EIH incorporates a remote sensor, an MEC server, and a communication module to synergistically integrate sensing, computing, and communication functions. Due to the limited computing capability on the EIH, part of the sensor data will be offloaded to the cloud server through the satellite.

The EIH directs the field robots by forming $\textbf{SC}^3$ closed loops. In each loop, the remote sensor captures the states of the controlled object. The sensor data is then processed in the MEC server or the cloud to judge the situation and make corresponding control commands. Next, the communication module sends these commands to the field robots. The field robots follow the received commands to handle their object. The whole process is performed periodically. In the following, we will detail the models of different parts of the $\textbf{SC}^3$ loop.

\begin{figure} [t]
	\centering
	\includegraphics[width=0.99\linewidth]{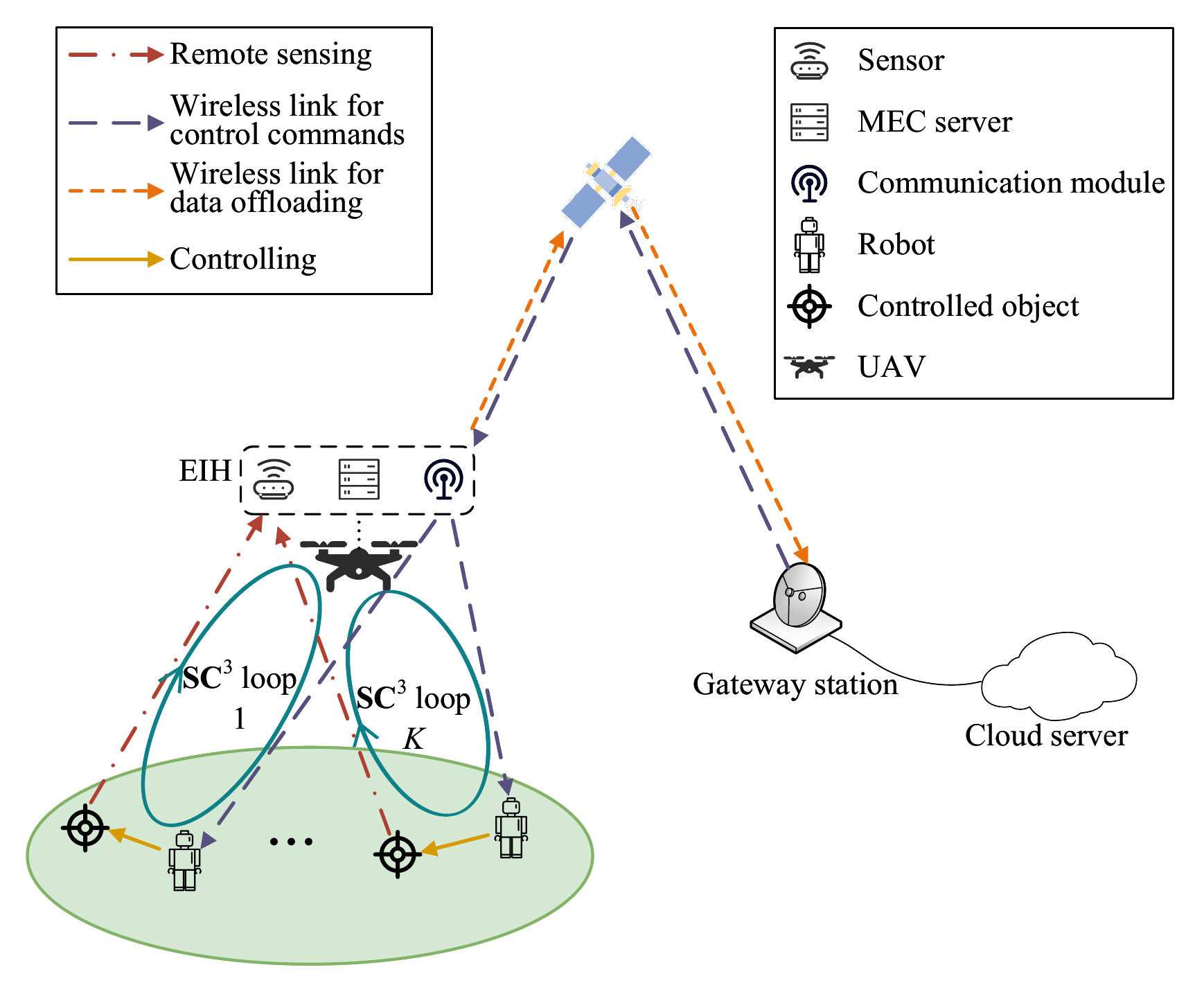}
	\caption{Illustration of an EIH-empowered $\textbf{SC}^3$ system, where the EIH is mounted on a UAV, and utilizes satellites to backhaul data.}
	\label{fig:system}
\end{figure}

\subsection{Computation Model}
In each cycle, the sensor data is analyzed to compute the optimal control commands. Due to the limited computing capability of the MEC server in the EIH, some of the sensor data will be offloaded to the cloud through satellite. As shown in Fig. \ref{fig:computing}, we assume the sensor data can be arbitrarily split into three parts:
\begin{itemize}
	\item Part 1: processed in the MEC server completely.
	\item Part 2: pre-processed in the MEC server and then transmitted to the cloud for further processing.
	\item Part 3: processed in the cloud completely.
\end{itemize}
The data sizes of the three parts of the sensor data in loop $k$ are denoted as $D_{k,1}$, $D_{k,2}$ and $D_{k,3}$ in bits. We have
\begin{equation}\label{Dk}
D_{k,1}+D_{k,2}+D_{k,3} = D_k,
\end{equation}
where $D_k$ denotes the total size of the sensor data of $\textbf{SC}^3$ loop $k$ in each cycle.

\begin{figure} [t]
	\centering
	\includegraphics[width=0.99\linewidth]{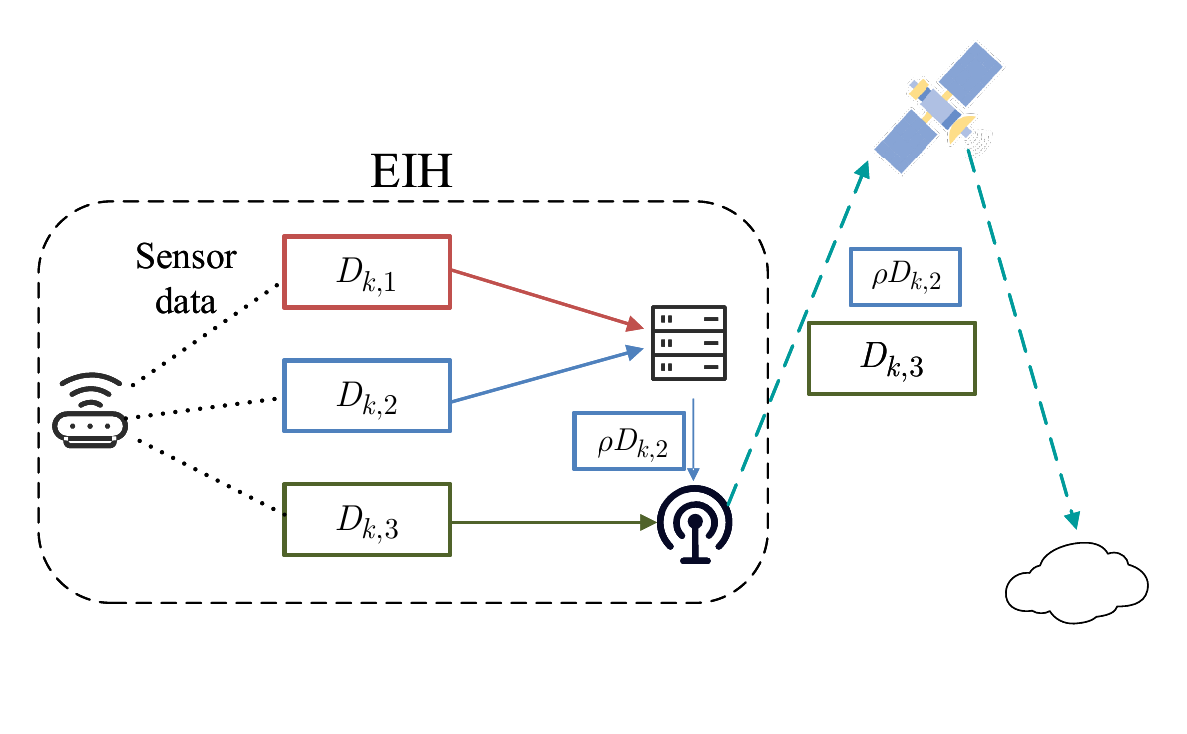}
	\caption{Illustration of three flows of the remote sensor data.}
	\label{fig:computing}
\end{figure}

The three parts of sensor data are processed in parallel as data streams. The overall computation time depends on the maximum processing time of the three parts.
 
For the first part of sensor data, the computation time can be formulated as
\begin{equation}\label{Tk1}
T_{k,1}^{\text{comp}} = \frac{\alpha D_{k,1}}{f_{k,1}},
\end{equation} 
where $\alpha$ denotes the number of CPU cycles for processing the sensor data per bit, and $f_{k,1}$ is the computing capability (i.e., CPU frequency) allocated to first part of data in loop $k$ by the MEC.

For the second part of sensor data, the data are first pre-processed in the MEC server, where the processing time can be calculated as
\begin{equation}\label{Tcompk2}
T_{k,2}^{\text{proc}} =
 \frac{\beta D_{k,2}}{f_{k,2}},
\end{equation} 
where $\beta$ denotes the number of CPU cycles for pre-processing sensor data per bit, and $f_{k,2}$ is the computing capability allocated to second part of data in loop $k$ by the MEC server.

We assume the data compression ratio of pre-processing is $\rho$, i.e., data of $\rho D_{k,2}$ bits will be transmitted to the satellite for further processing after pre-processing. The transmission latency from the UAV to satellite can be calculated as
\begin{equation}\label{Ttransk2}
T_{k,2}^{\text{trans}} = \frac{\rho D_{k,2}}{R_{k,2}},
\end{equation}
where $R_{k,2}$ is the backhaul rate from UAV to satellite allocated to the second part of data in loop $k$.

The downlink transmission data rate from the satellite to the cloud is usually much bigger than the uplink data rate. Therefore, the downlink transmission latency is negligible compared with $T_{k,2}^{\text{trans}}$. In addition, we assume that the cloud has enough computing capability, so that the computation time in the cloud is negligible. The transmission latency of the output data is also assumed to be ignorable as  the output data is much smaller than the input data size. Based on the above analysis, the overall time for processing the second part of sensor data can be calculated as
\begin{equation}\label{Tk2}
T_{k,2}^{\text{comp}} = \left\{
\begin{aligned}
&0, &\text{if}\  D_{k,2} = 0\\
&\max \left\{T_{k,2}^{\text{proc}}, T_{k,2}^{\text{trans}}\right\} + 4\tau, &\text{if}\  D_{k,2} > 0,
\end{aligned}
\right
.
\end{equation}
where $\tau$ is the propagation latency between the ground and the satellite. The pre-processing and the transmission process are performed in parallel. Therefore, the overall latency will be determined by the maximum of the two processes.  It should be noted that we have considered the special case of $D_{k,2} = 0$, i.e., we do not split any data to Part $2$. In such case, the latency for the second part of data is zero.

For the third part of data, similarly, the total time can be calculated as 
\begin{equation}\label{Tk3}
T_{k,3}^{\text{comp}}= \left\{
\begin{aligned}
&0, &\text{if}\  D_{k,3} = 0\\
&\frac{ D_{k,3}}{R_{k,3}} + 4\tau,\ &\text{if}\  D_{k,3} > 0,
\end{aligned}
\right
.
\end{equation}
where $R_{k,3}$ is the satellite-backhaul rate allocated to the third part of sensor data in loop $k$.

In conclusion, the overall time duration of the computing phase can be formulated as
\begin{equation}\label{Tcompk}
T_{k}^{\text{comp}} = \max \left\{T_{k,1}^{\text{comp}}, T_{k,2}^{\text{comp}}, T_{k,3}^{\text{comp}}\right\}.
\end{equation}

Considering the limited computation and communication resources, we have the following constraints
\begin{align}
\sum_{k = 1}^{K} \left( f_{k,1} + f_{k,2}\right)  \leq F_{\text{max}},
\end{align}
\begin{align}
\sum_{k = 1}^{K} \left( R_{k,2} + R_{k,3}\right)  \leq R_{\text{max}}^{\text{U2S}},
\end{align}
where $F_{\text{max}}$ denotes the computing capability of the MEC server, and $R_{\text{max}}^{\text{U2S}}$ denotes the maximum satellite-backhaul rate.

\subsection{Communication Model}
After processing the sensor data, the control commands will be transmitted to the field robots. The EIH transmits the control commands to $K$ robots simultaneously through orthogonal channels\footnote{In this paper, an assumption is made that different channels are completely orthogonal, to neglect the interference between the robots. This allows us to concentrate on the control-performance-oriented resource allocation and data offloading scheme. The inclusion of interference among robots will be a future study.}. Denoting the transmit power allocated to loop $k$ as $p_k$, we have 
\begin{equation}
\sum_{k = 1}^{K} p_k \leq P_{\text{max}},
\end{equation}
where $P_{\text{max}}$ represents the transmit power constraint.

The EIH communications with satellites and field robots in different frequency bands, so we assume that there is no communication interference among different communication links. The wireless channels between the EIH and robots are assumed to be dominated by line-of-sight (LoS) links~\cite{Zeng2016}. Therefore, the channel gain from the EIH to robot $k$ follows the free space path loss model as $g_k = \frac{\gamma_0}{d^2_k}$, where $d_k$ denotes the distance from the UAV to robot $k$ and $\gamma_0$ is the reference channel gain at the distance of one meter. The transmit data rate from EIH to robot $k$ can be calculated as\footnote{The small-scale channel fading is not considered here. In more diverse environmental conditions where the small-scale fading should be taken into account, we can regarded \eqref{eq12} as an upper bound for the	expected data rate}
\begin{equation}\label{eq12}
R_k^{\text{U2G}}\left( p_k \right)  = \log_2\left( 1 + \frac{g_kp_k}{\sigma^2}\right),
\end{equation}
where $\sigma^2$ denotes the channel noise power.

The computing phase and communication phase of each loop share the time resource, which can be denoted as
\begin{equation}
	T_{k}^{\text{comp}}+T_{k}^{\text{commu}}\le T_k,
\end{equation}
where $T_{k}^{\text{commu}}$ denotes the transmission time of the control commands, and $T_k$ is the time recourse reserved for the computing and communication phase, which is assumed to be fixed in each cycle.

\subsection{Control Model}
After receiving the control commands from the EIH, the robots follow the commands to handle the objects. For simplicity, we model each robot and its object as a linear control system\footnote{Although some control systems may be quite complicated, they can still be analyzed as linear systems through local linearization\cite{linearization}.}, and formulate the discrete-time system equation of the $k$-th control system in cycle $t$  as~\cite{LQR}
\begin{equation}\label{system}
	\mathbf{x}_{k,t+1} = \mathbf{A}_{k}\mathbf{x}_{k,t}+\mathbf{B}_{k}\mathbf{u}_{k,t}+\mathbf{v}_{k,t},
\end{equation}
where $t$ denotes the cycle index, $\mathbf{x}_{k,t}\in \mathbb{R}^{n_k}$ denotes the system state, such as the temperature or radiation intensity,  $\mathbf{u}_{k,t}\in \mathbb{R}^{m_k}$ denotes the control input, $n_k$ and $m_k$ denote the dimensions of the system state and control input, respectively, $\mathbf{v}_{k,t}\in \mathbb{R}^{n_k}$ denotes Gaussian control system noise with mean zero and covariance $\mathbf{\Sigma}^{\text{V}}_k$, and $\mathbf{A}_{k}$ and $\mathbf{B}_{k}$ are fixed $n_k\times n_k$ and $n_k\times m_k$ matrices denoting the state matrix and input matrix. The matrix $\mathbf{A}_{k}$ describes how a state depends on the previous state, and the matrix $\mathbf{B}_{k}$ describes how the control input affects the system state. The two matrices are determined by the type of control task, which are assumed to be known to the EIH.

We also consider a linear sensing model, where the  observation equation can be written as
\begin{equation}\label{sensor}
	\mathbf{y}_{k,t} = \mathbf{C}_k \mathbf{x}_{k,t} + \mathbf{w}_{k,t},
\end{equation}
where $\mathbf{y}_{k,t}\in \mathbb{R}^{q_k}$ is the sensing output, $\mathbf{C}_k\in \mathbb{R}^{q_k\times n_k}$ is the observation matrix that defines how the system state is transformed into an output vector, $q_k$ denotes the dimension of the sensing output, and $\mathbf{w}_{k,t}\in \mathbb{R}^{q_k} $ is the Gaussian sensing noise with mean zero and covariance $\mathbf{\Sigma}^{\text{W}}_k$.

In this paper, we evaluate the control performance with the infinite-horizon LQR cost\cite{LQR}, which is formulated as
\begin{equation}\label{LQR}
l_k \triangleq \lim\limits_{N\rightarrow \infty}\mathbb{E} \left[\frac{1}{N} \sum_{t = 1}^{N} \left(\mathbf{x}_{k,t}^\text{T}\mathbf{Q}_{k}\mathbf{x}_{k,t} +\mathbf{u}_{k,t}^\text{T}\mathbf{R}_{k}\mathbf{u}_{k,t}\right) \right],
\end{equation}
where $\mathbf{Q}_k$ and $\mathbf{R}_k$ are semi-positive definite weight matrices. The term $\mathbf{x}_{k,t}^\text{T}\mathbf{Q}_{k}\mathbf{x}_{k,t}$ denotes the deviation of the system from zero state, and the term $\mathbf{u}_{k,t}^\text{T}\mathbf{R}_{k}\mathbf{u}_{k,t}$ denotes the control energy consumption. The physical significance of the LQR cost function is that it is a comprehensive measure of the system state convergence and control energy consumption. A lower LQR cost indicates smaller system state deviation and less control energy consumption, which implies a better control performance. The weight matrices $\mathbf{Q}_k$ and $\mathbf{R}_k$ balance the state and the energy, which can be set according to the practical requirements. For example, one should set the entries of $\mathbf{Q}_{k}$ to be large if one wants the state of the system to converge to zero quickly but is not so concerned about energy consumption.

As in \eqref{LQR}, the LQR cost is determined by the system state $\mathbf{x}_{k,t}$ and the control input $\mathbf{u}_{k,t}$, which is mainly influenced by the control strategy that computes the control input based on the sensing output. We will not try to optimize the control strategy, which is beyond the scope of this work. Instead, we will focus on the impact of the communication capability on the LQR cost, and accordingly carry out computing and communication resource allocation. 

We use the information entropy transmitted from the EIH to the robots per cycle to evaluate the communication capability. According to \cite[Theorem 5]{LQR}, in order to achieve a certain LQR cost $l_k$, the average information entropy transmitted through channel $k$ per cycle must satisfy the following constraint
\begin{align}\label{Rl}
	BT_{k}^{\text{commu}}R_k^{\text{U2G}}\left( p_k \right)   \geq \overline{e}_k \left( l_k \right),
\end{align}
where $B$ denotes the bandwidth of each channel, the left side of \eqref{Rl} denotes the maximum information entropy transmitted per cycle, and
\begin{equation}
	\overline{e}_k \left( l_k \right) \triangleq h_k  + \frac{n_k}{2} \log_2 \left( 1+ \frac{ n_k|\det \left( \mathbf{N}_{k} \mathbf{M}_{k}\right) |^\frac{1}{n_k} }{l_k-l_{\text{min},k}} \right)
\end{equation}
denotes the minimum entropy to achieve LQR cost $l_k$, $h_k\triangleq \log_2 |\det \left(  \mathbf{A}_k \right) |$ is the intrinsic entropy rate of object $k$. It should be noted that $h_k$ evaluates the stability of object $k$. An object with a larger intrinsic entropy rate is more difficult to stabilize\cite{control1}. The term $l_{\text{min},k} = \text{tr} \left( \mathbf{\Sigma}^\text{V}_k \mathbf{S}_k\right) +  \text{tr} \left(\mathbf{\Sigma}_k \mathbf{A}_k^\text{T} \mathbf{M}_k \mathbf{A}_k \right) $ denotes the lower bound of the LQR cost, where $\mathbf{N}_{k}$, $\mathbf{M}_{k}$, $\mathbf{\Sigma}_k$ and $\mathbf{S}_k$ are the solutions to the matrix equations shown in \cite{LQR}, which are related to the control parameters, i.e., $\mathbf{A}_k$, $\mathbf{B}_k$, $\mathbf{C}_k$, $\mathbf{R}_k$, $\mathbf{Q}_k$, $\mathbf{\Sigma}^{\text{V}}_k$, and $\mathbf{\Sigma}^{\text{W}}_k$. 

\begin{remark}
	The constraint in \eqref{Rl} is derived under the assumption that both the sensing noise and control process noise follow Gaussian distributions. This assumption is based on the central limit theorem, as the noise can be regarded as the cumulative outcome of a large number of independent variables. If the noise is non-Gaussian, the expression in \eqref{Rl} will be imprecise, offering only an approximation of the relationship between the information entropy and LQR cost. It is worth noting that if the sensing capability is sufficiently strong so that the sensing noise is negligible, then \eqref{Rl} can be extended to scenarios where the control noise is non-Gaussian, as delineated in \cite[Theorem 1]{LQR}. The revised expression of $\overline{e}_k$ for cases with non-Gaussian control process noise and negligible sensing noise is\cite{LQR}
\begin{equation}
	\overline{e}'_k \left( l_k \right) = h_k  + \frac{n_k}{2} \log_2 \left( 1+ \frac{ n_kN \left( \mathbf{v}_k\right)|\det  \left( \mathbf{M}_{k}\right) |^\frac{1}{n_k} }{l_k-\text{tr} \left( \mathbf{\Sigma}^\text{V}_k \mathbf{S}_k\right) } \right),
\end{equation}
where $N\!\left( \mathbf{v}_k\right) = \frac{1}{2\pi}\exp \left( \frac{2}{n_k} h\!\left(  \mathbf{v}_k \right)  \right) $, and $h\!\left(  \mathbf{v}_k \right)$ is the differential entropy of the noise $\mathbf{v}_k $.  The more complicated case with non-Gaussian noises can be a future study.
\end{remark}

\subsection{Problem Formulation}
In this work, we aim to minimize the sum LQR cost of the $\textbf{SC}^3$ loops by jointly optimizing the transmit power allocation $\mathbf{p}= \left\{p_k\right\}$, the computing capability allocation $\mathbf{f} = \left\{f_{k,i}\right\}$, the satellite-backhaul rate allocation $\mathbf{R} = \left\{R_{k,j}\right\}$, and the data splitting vector $\textbf{D} = \left\{D_{k,r}\right\}$ (where $i\in \left\{1,2\right\}$, $j\in \left\{2,3\right\}$ and $r \in \left\{1,2,3\right\}$), while keeping the information entropy constraint satisfied. The optimization problem is formulated as
\begin{subequations}\label{P1}
	\begin{align}
	\min_{\mathbf{p},\mathbf{f}, \mathbf{R}, \mathbf{D}, \mathbf{l}} \quad&\sum_{k = 1}^{K} l_k  \label{P1a} \\ 
	\mbox{\textit{s.t.}}\quad & \sum_{k = 1}^{K} p_k \leq P_{\text{max}},\label{P1b} 
	\\&D_{k,1}\!+\!D_{k,2}\!+\!D_{k,3} = D_k, \quad  k = 1,2,\cdots,K,\label{P1c}
	\\&\sum_{k = 1}^{K} \left( f_{k,1} + f_{k,2}\right)  \leq F_{\text{max}},\label{P1d}
	\\	&\sum_{k = 1}^{K} \left( R_{k,2} + R_{k,3}\right)  \leq R_{\text{max}}^{\text{U2S}},\label{P1e}
	\\&T_{k}^{\text{comp}}+T_{k}^{\text{commu}}\le T_k,\quad k = 1,2,\cdots,K,\label{P1f}
	\\&BT_{k}^{\text{commu}}R_k^{\text{U2G}}\left( p_k \right)   \geq \overline{e}_k \left( l_k \right),  \label{P1g}
	\end{align}
\end{subequations}
where $\mathbf{l}=\left[ l_1, l_2, \cdots, l_K \right] $. The problem in \eqref{P1} is non-convex due to the non-convex and non-continuous expression of $T_{k}^{\text{comp}}$ as in \eqref{Tcompk}. Therefore, it is difficult to solve \eqref{P1} directly. Next, we will recast \eqref{P1} to an equivalent and more tractable form by decoupling the optimization of data splitting vector $\mathbf{D}$.

\section{Problem Transformation and Iterative Solution}
\label{sec_algorithm}
In this section, we will derive the optimal data splitting vector, given the computing capability and satellite-backhaul rate allocated to each $\textbf{SC}^3$ loop, so that we can decouple the optimization of $\mathbf{D}$, and find a way to simplify the original problem. Then, based on the simplified optimization problem, we propose an iterative algorithm to obtain a sub-optimal solution.

\subsection{Problem Transformation}
It can be seen from \eqref{P1} that the LQR cost $l_k$ is influenced by the other variables only through the information entropy constraint \eqref{P1g}. As $\overline{e}_k \left( l_k \right)$ is decreasing with $l_k$, we can prove that $l_k$ is increasing with the computation time $T_{k}^{\text{comp}}$. Therefore, minimizing the LQR cost $l_k$ in $\textbf{SC}^3$ loop $k$ indicates minimizing the computation time $T_{k}^{\text{comp}}$. In addition, it is observed that the data splitting parameters in different $\textbf{SC}^3$ loops are decoupled with each other. Therefore, if the computing and communication resources allocated to $\textbf{SC}^3$ loop $k$ is given, the optimal data splitting vector in that loop can be calculated by minimizing the computation time. Based on the above analysis, we can transform the optimization problem \eqref{P1} to
\begin{subequations}\label{P2}
	\begin{align}
	\min_{\mathbf{p},\mathbf{f}', \mathbf{R}', \mathbf{l}} \ &\sum_{k = 1}^{K} l_k  \label{P2a} \\ 
	\mbox{\textit{s.t.}}\  & \sum_{k = 1}^{K} p_k \leq P_{\text{max}},\label{P2b} 
	\\&\sum_{k = 1}^{K} f_k \leq F_{\text{max}},
	\\&\sum_{k = 1}^{K} R_k \leq R_{\text{max}}^{\text{U2S}},
	\\&T_{k}^{\text{comp},*}\left( f_k, R_k \right)  +T_{k}^{\text{commu}}\le T_k,\  k = 1,2,\cdots,K,\label{P2e}
	\\&BT_{k}^{\text{commu}}R_k^{\text{U2G}}\left( p_k \right)   \geq \overline{e}_k \left( l_k \right),  \label{P2d}
	\end{align}
\end{subequations}
where $f_k = f_{k,1} + f_{k,2}$ denotes the overall computing capability allocated to $\textbf{SC}^3$ loop $k$, $R_k =  R_{k,2} + R_{k,3}$ denotes the satellite-backhaul rate allocated to $\textbf{SC}^3$ loop $k$, and $\mathbf{f}' = \left\{f_k\right\}$ and $\mathbf{R}' = \left\{R_k\right\}$. The function $T_{k}^{\text{comp},*}\left( f_k, R_k \right)$ denotes the minimal computation time of loop $k$ when $f_k$ and $R_k$ are given. $T_{k}^{\text{comp},*}\left( f_k, R_k \right)$ can be calculated as the optimal objective function value of the following optimization problem
\begin{subequations}\label{P3}
	\begin{align}
	\min_{\mathbf{D}_k, \mathbf{f}_k, \mathbf{R}_k} \quad&\max \left\{T_{k,1}^{\text{comp}}, T_{k,2}^{\text{comp}}, T_{k,3}^{\text{comp}}\right\}  \label{P3a} \\ 
	\mbox{\textit{s.t.}}\quad & D_{k,1}\!+\!D_{k,2}\!+\!D_{k,3} = D_k,\label{P3b} 
	\\&f_{k,1}+f_{k,2} \leq f_k,\label{P3c}
	\\&R_{k,2}+R_{k,3} \leq R_k,\label{P3d}
	\end{align}
\end{subequations}
where $\mathbf{D}_k = \left\{D_{k,1}, D_{k,2}, D_{k,3}\right\}$, $\mathbf{f}_k = \left\{f_{k,1}, f_{k,2}\right\}$, and $\mathbf{R}_k = \left\{R_{k,2}, R_{k,3}\right\}$ denote the split schemes in $\textbf{SC}^3$ loop $k$. The optimization problem \eqref{P3} is non-convex due to the non-convex and non-continuous objective function. Next, we will solve the problem \eqref{P3} and give a closed-form expression of $T_{k}^{\text{comp},*}\left( f_k, R_k \right)$.

\subsection{Optimal Solution to Problem \eqref{P3}}
The problem in \eqref{P3} is not convex and therefore difficult to solve directly. In order to solve \eqref{P3}, we have the following lemma.
\begin{lemma}
	\label{lemma1}
	If $f_{k}<\frac{\alpha D_k}{4\tau}$, then the equations 
	\begin{align}
	&T_{k,1}^{\text{comp}} = T_{k,2}^{\text{comp}} = T_{k,3}^{\text{comp}},\label{lemma1_1}\\
	&T_{k,2}^{\text{proc}} = T_{k,2}^{\text{trans}}\label{lemma1_2}
	\end{align}
	must hold in order to minimize the computation time in $\textbf{SC}^3$ loop $k$.
\end{lemma}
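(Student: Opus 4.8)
The plan is to prove Lemma~\ref{lemma1} by a contradiction-based exchange argument: assume an optimal split of $\textbf{SC}^3$ loop $k$ does \emph{not} satisfy \eqref{lemma1_1}--\eqref{lemma1_2}, and then construct a feasible perturbation that strictly decreases $T_k^{\text{comp}} = \max\{T_{k,1}^{\text{comp}}, T_{k,2}^{\text{comp}}, T_{k,3}^{\text{comp}}\}$, contradicting optimality. The condition $f_k < \frac{\alpha D_k}{4\tau}$ is what guarantees the optimal computation time exceeds $4\tau$, so that Parts~2 and~3 are genuinely ``active'' (i.e. one cannot simply dump all data on the MEC and ignore the satellite paths); I would record this observation first, since it underlies every exchange step.

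First I would argue \eqref{lemma1_2}, the inner balance for Part~2. Fix the outer allocation $(D_{k,1},D_{k,2},D_{k,3})$ and the loop budgets $f_k, R_k$, and suppose $T_{k,2}^{\text{proc}} \neq T_{k,2}^{\text{trans}}$ at optimum. Since $T_{k,2}^{\text{comp}} = \max\{T_{k,2}^{\text{proc}}, T_{k,2}^{\text{trans}}\} + 4\tau$ depends only on the larger of the two, and $T_{k,2}^{\text{proc}} = \beta D_{k,2}/f_{k,2}$ is decreasing in $f_{k,2}$ while $T_{k,2}^{\text{trans}} = \rho D_{k,2}/R_{k,2}$ is decreasing in $R_{k,2}$, I can shift a small amount of the resource (CPU from $f_{k,2}$ toward $f_{k,1}$, or rate from $R_{k,2}$ toward $R_{k,3}$) away from whichever sub-process is strictly faster, without increasing $T_{k,2}^{\text{comp}}$, and use the freed resource to reduce $T_{k,1}^{\text{comp}}$ or $T_{k,3}^{\text{comp}}$. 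Hence at optimum the two must be equal (or one can push to equality without loss), giving \eqref{lemma1_2}.

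Next I would establish \eqref{lemma1_1}, equality of the three stream times. Suppose not, so $T_k^{\text{comp}} = \max_r T_{k,r}^{\text{comp}}$ is attained by a strict subset of the three parts. The idea is to move data (and, where needed, resources) out of the slowest stream(s) into a strictly faster one: decreasing $D_{k,r}$ for a maximal-time part $r$ strictly decreases $T_{k,r}^{\text{comp}}$ (using \eqref{Tk1}, \eqref{Tk2} with the already-established inner balance, or \eqref{Tk3}), while increasing $D_{k,r'}$ for a non-maximal part $r'$ increases $T_{k,r'}^{\text{comp}}$ only continuously from a strictly smaller value; for sufficiently small transfer the max strictly drops. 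One must handle the corner cases $D_{k,2}=0$ or $D_{k,3}=0$ in \eqref{Tk2}--\eqref{Tk3}: here the hypothesis $f_k < \frac{\alpha D_k}{4\tau}$ is essential, because it forces $T_{k,1}^{\text{comp}} = \alpha D_{k,1}/f_{k,1} > 4\tau$ whenever all data would sit in Part~1, so activating Part~2 or Part~3 (which costs a floor of $4\tau$ plus a term made arbitrarily small by allocating enough $f_{k,2}$ or $R_{k,j}$) strictly helps. Iterating these exchanges until no strict improvement is possible yields \eqref{lemma1_1} and \eqref{lemma1_2} simultaneously.

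The main obstacle I anticipate is the non-smoothness and the boundary behaviour: the $\max$ in the objective and the piecewise definitions \eqref{Tk2}--\eqref{Tk3} mean the exchange argument has to be done with explicit small-$\epsilon$ perturbations and a careful case split on which parts are zero, rather than by differentiating a Lagrangian. Making the ``strictly faster stream absorbs the transfer'' step fully rigorous — in particular verifying that when $D_{k,2}$ or $D_{k,3}$ transitions from $0$ to positive the incurred $+4\tau$ is still dominated, which is exactly where $f_k < \frac{\alpha D_k}{4\tau}$ enters — is the delicate point; once that case is dispatched, the remaining interior exchanges are routine monotonicity.
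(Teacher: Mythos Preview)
Your proposal is correct and follows essentially the same exchange argument as the paper: the paper, like you, shifts data among the three parts to force \eqref{lemma1_1} and shifts CPU/rate between $(f_{k,1},f_{k,2})$ or $(R_{k,2},R_{k,3})$ to force \eqref{lemma1_2}, invoking $f_k<\frac{\alpha D_k}{4\tau}$ precisely to guarantee $T_k^{\text{comp},*}>4\tau$ so that activating a zero-data satellite stream does not hurt. The only cosmetic difference is the order (the paper argues \eqref{lemma1_1} first, then \eqref{lemma1_2}) and that your write-up is more explicit about the $\epsilon$-perturbation and boundary cases than the paper's informal sketch.
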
 

\begin{proof}
	From \eqref{Tk1}, \eqref{Tk2} and \eqref{Tk3}, we can see that the computation time of each part of sensor data is strictly increasing with the respective data size, i.e., $D_{k,1}$, $D_{k,2}$ and $D_{k,3}$. Therefore, if \eqref{lemma1_1} does not hold and $T_{k,i}$ ($i \in \left\{1, 2, 3\right\}$) is larger than the other two terms, we have $T_{k}^{\text{comp}} = T_{k,i}^{\text{comp}}$. We can reduce the corresponding data size $D_{k,i}$ and increase the data size of the other two parts, until \eqref{lemma1_1} holds. Following the above procedure, we decrease  $T_{k,i}^{\text{comp}}$ and increase the computation time of the other two parts, and thereby decreasing the overall computation time $T_k^{\text{comp}}$. Therefore, the equation \eqref{lemma1_1} must hold to minimize the overall computation time if $f_{k}<\frac{\alpha D_k}{4\tau}$ (the condition $f_{k}<\frac{\alpha D_k}{4\tau}$ guarantees that $T_{k}^{\text{comp},*}>4\tau$ and increasing the data size of the second or third parts at the jumping point, i.e., the zero pint, will not increase the overall computation time).
	
	Next, we prove that \eqref{lemma1_2} should hold to minimize $T_{k}^{\text{comp}}$. If  $T_{k,2}^{\text{comp}}>T_{k,2}^{\text{trans}}$, which indicates that the communication resource for the second part of sensor data in loop $k$ is redundant, we can decrease $R_{k,2}$ and increase $R_{k,3}$ until \eqref{lemma1_2} holds. The above procedure will decrease $T_{k,3}^{\text{comp}}$ while maintaining $T_{k,2}^{\text{comp}}$ unchanged, resulting in a non-increasing overall computation time $T_{k}^{\text{comp}}$. On the other hand, if $T_{k,2}^{\text{comp}}<T_{k,2}^{\text{trans}}$, we can decrease $f_{k,2}$ and increase $f_{k,1}$ in a similar way to ensure that $T_{k}^{\text{comp}}$ is non-increasing.
\end{proof}

Based on \textbf{Lemma \ref{lemma1}}, we have the following proposition.

\begin{proposition}
	\label{prop1}
	The optimal value of the objective function of \eqref{P3} is given by the piece-wise function shown in \eqref{Tcompk_fk} on the next page.
	\begin{figure*}[!t]
		\begin{small}
	\begin{equation}\label{Tcompk_fk}
	\!\!\!\!T_{k}^{\text{comp},*}\!\!\left( f_k, R_k \right) \!=\! \left\{
	\begin{aligned}
	&T_{k}^{1}\left( f_k, R_k \right)\!\triangleq\! \frac{\beta D_k}{ \beta R_k \!+\! \left( 1 \!-\! \rho  \right) f_k}\!+\! 4\tau,\quad  \left( f_k,R_k \right) \in \mathcal{S}_1\triangleq \left\{\left( f,R\right) |\  0\leq f\leq\min \left\{\frac{\left( \alpha-\alpha\rho -\beta \right) D_k - 4\beta\tau R}{4\left( 1-\rho \right) \tau},  \frac{\beta R}{\rho} \right\}\right\}\\
	&T_{k}^{2}\left( f_k, R_k \right)\!\triangleq\!\frac{\rho \alpha D_k \!-\! 4\rho \tau f_k \!+\! 4\beta R_k \tau}{ \rho f_k + \left( \alpha - \beta \right) R_k} \!\!+\! 4\tau, \  \left( f_k,R_k \right) \in \mathcal{S}_2 \triangleq \left\{\left( f,R\right) |\  \frac{\beta R}{\rho}\leq f\leq \frac{\left( \alpha-\alpha \rho -\beta \right) D_k - 4\beta \tau R}{4\left( 1-\rho \right) \tau}, R\geq 0\right\}\\
	&T_{k}^{3}\left( f_k, R_k \right)\!\triangleq\!\frac{\alpha D_k-4\tau f_k}{ f_k + \alpha R_k}+4\tau, \  \left( f_k,R_k \right) \in \mathcal{S}_3\triangleq \left\{\left( f,R\right) |\  \frac{\left( \alpha-\alpha \rho -\beta \right) D_k - 4\beta \tau R}{4\left( 1-\rho \right) \tau}\leq f \leq \frac{\alpha D_k}{4\tau},f\geq 0, R\geq 0\right\}\\
	&T_{k}^{4}\left( f_k, R_k \right)\!\triangleq\!\frac{\alpha D_k}{f_k},\quad \left( f_k,R_k \right) \in \mathcal{S}_4\triangleq \left\{\left( f,R\right) |\  f \geq \frac{\alpha D_k}{4\tau},R\geq 0\right\}
	\end{aligned}
	\right
	.
	\end{equation}
	\end{small}
	\hrulefill
	\end{figure*}
\end{proposition}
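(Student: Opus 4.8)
The plan is to solve \eqref{P3} in two stages, first disposing of the ``abundant computing'' regime and then reducing the remaining case to a one‑parameter optimization. If $f_k \ge \frac{\alpha D_k}{4\tau}$, assigning all of the data to Part~1 ($D_{k,1}=D_k$) gives $T_{k,1}^{\text{comp}}=\frac{\alpha D_k}{f_k}\le 4\tau$; since routing any data through Part~2 or Part~3 incurs the additive $4\tau$ propagation penalty (see \eqref{Tk2}, \eqref{Tk3}), this is optimal, which yields $T_{k}^{\text{comp},*}=T_k^4$ on $\mathcal{S}_4$. For the complementary case $f_k<\frac{\alpha D_k}{4\tau}$, I would invoke \textbf{Lemma \ref{lemma1}}: at an optimum we may take $T_{k,1}^{\text{comp}}=T_{k,2}^{\text{comp}}=T_{k,3}^{\text{comp}}=:T$ and $T_{k,2}^{\text{proc}}=T_{k,2}^{\text{trans}}$, and I would additionally argue that the resource budgets are tight ($f_{k,1}+f_{k,2}=f_k$, $R_{k,2}+R_{k,3}=R_k$), since any slack could be moved to Part~1 or Part~3 to strictly reduce one $T_{k,i}^{\text{comp}}$ and then rebalanced by the Lemma~\ref{lemma1} argument without increasing the maximum.

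These equalities let me write $D_{k,1}=\frac{T f_{k,1}}{\alpha}$, $D_{k,2}=\frac{(T-4\tau)f_{k,2}}{\beta}$, $R_{k,2}=\frac{\rho f_{k,2}}{\beta}$, $D_{k,3}=(T-4\tau)R_{k,3}$, together with $f_{k,1}=f_k-f_{k,2}$ and $R_{k,3}=R_k-\frac{\rho f_{k,2}}{\beta}$. Substituting these into $D_{k,1}+D_{k,2}+D_{k,3}=D_k$ and letting $g:=f_{k,2}$ range over $[0,\min\{f_k,\frac{\beta R_k}{\rho}\}]$ (the bounds coming from $f_{k,1}\ge 0$ and $R_{k,3}\ge 0$), I obtain $T$ as an explicit linear‑fractional function $T(g)$ of $g$ for fixed $(f_k,R_k)$. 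Because a linear‑fractional function is monotone on any interval avoiding its pole, the minimizing $T$ is attained at an endpoint of this interval: $g=0$ reproduces $T_k^3$; $g=f_k$ (feasible iff $f_k\le\frac{\beta R_k}{\rho}$) reproduces $T_k^1$; and $g=\frac{\beta R_k}{\rho}$ (feasible iff $f_k\ge\frac{\beta R_k}{\rho}$), at which Part~3 empties, reproduces $T_k^2$. I would verify each of these three reductions by direct substitution back into the closed forms in \eqref{Tcompk_fk}.

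To decide which endpoint wins, I would compute the sign of $T'(g)$ and find it equals the sign of $4(1-\rho)\tau f_k+4\beta\tau R_k-(\alpha-\alpha\rho-\beta)D_k$, independent of $g$. When this is negative, $T(g)$ is decreasing and the optimum sits at $g=g_{\max}$, giving $T_k^1$ if $f_k\le\frac{\beta R_k}{\rho}$ (region $\mathcal{S}_1$) and $T_k^2$ if $f_k\ge\frac{\beta R_k}{\rho}$ (region $\mathcal{S}_2$); when it is positive, $T(g)$ is increasing and the optimum is $g=0$, giving $T_k^3$ (region $\mathcal{S}_3$). I would also handle the endpoint sub‑cases where a part empties ($D_{k,2}=0$ or $D_{k,3}=0$) by a parallel but strictly simpler two‑part linear‑fractional analysis, which yields the same closed forms, and check the degenerate cases $R_k=0$ and $\alpha(1-\rho)\le\beta$ (a nonpositive threshold) separately.

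The main obstacle will be the region bookkeeping in the last step rather than any single computation: one must check that $\mathcal{S}_1,\mathcal{S}_2,\mathcal{S}_3,\mathcal{S}_4$ exhaust $\{(f,R):f\ge 0,\ R\ge 0\}$, that the four branches of \eqref{Tcompk_fk} agree on their shared boundaries so that $T_{k}^{\text{comp},*}$ is continuous, and that each claimed endpoint configuration is genuinely feasible throughout its region (nonnegative data sizes, and $T>4\tau$ wherever a ``$+4\tau$'' term appears). Making the threshold comparisons in the definitions of $\mathcal{S}_1$, $\mathcal{S}_2$ and $\mathcal{S}_3$ line up precisely with the sign conditions above — including the behaviour near the switch $f_k=\frac{\beta R_k}{\rho}$ between the $g=f_k$ and $g=\frac{\beta R_k}{\rho}$ endpoints — is where the care is needed.
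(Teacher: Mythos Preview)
Your proposal is correct and follows essentially the same route as the paper's proof: handle the $f_k\ge \frac{\alpha D_k}{4\tau}$ regime directly, then in the complementary regime use Lemma~\ref{lemma1} together with tightness of the resource budgets to reduce the problem to a single linear-fractional function of $f_{k,2}$, whose monotonicity (governed by the sign of $4(1-\rho)\tau f_k+4\beta\tau R_k-(\alpha-\alpha\rho-\beta)D_k$) forces the optimum to an endpoint yielding $T_k^1$, $T_k^2$, or $T_k^3$. Your additional attention to region bookkeeping, boundary continuity, and the degenerate cases $R_k=0$ and $\alpha(1-\rho)\le\beta$ goes somewhat beyond what the paper spells out, but the core argument is the same.
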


\begin{proof}
	See Appendix \ref{Appendix1}.
\end{proof}

\begin{remark}
	In most cases, all the computing and communication resources allocated to $\textbf{SC}^3$ loop $k$ will be utilized, and $T_{k}^{\text{comp},*}$ is strictly decreasing with respect to $f_k$ and $R_k$. The exception is when $f_k \geq \frac{\alpha D_k}{4\tau}$, which indicates that the MEC computing capability is enough and the computation time in the MEC server is less than the satellite propagation delay. In such case, all the sensor data will be processed in the EIH, and the satellite communication resource will not be utilized even if $R_k$ is large.
\end{remark}

\begin{remark}
	In the case when $\alpha-\alpha\rho -\beta <0$, we have $\mathcal{S}_1 = \mathcal{S}_2 = \emptyset$. In such case, we have  $D_{k,2} = 0$, i.e., none of the sensor data will be pre-processed in the MEC server. In fact, the condition $m-m\rho -n <0$ holds if $n$ is close to $m$ or $\rho$ is close to $1$, which implies that pre-processing the sensing is not so useful. 
\end{remark}

\subsection{Joint Communication and Computing Resource Optimization}
Based on \textbf{Proposition \ref{prop1}}, we can remove the optimization variables $\mathbf{D}$ and recast \eqref{P1} to an equivalent form as \eqref{P2}. However, \eqref{P2} is still a non-convex optimization problem due to the non-convexity of $T_{k}^{\text{comp},*}\left( f_k, R_k \right) $. Next, we propose an iterative algorithm to solve the joint communication and computing resource optimization problem 

First, we regard the communication time $\left\{ {T}^{\text{commu}}_{k}\right\}$ as optimization variables, and rewrite \eqref{P2} as
\begin{subequations}\label{P4}
	\begin{align}
	\min_{\mathbf{p},\mathbf{f}', \mathbf{R}', \mathbf{l}, \mathbf{T}^{\text{commu}}} \ &\sum_{k = 1}^{K} l_k  \label{P4a} \\ 
	\mbox{\textit{s.t.}}\  & \sum_{k = 1}^{K} p_k \leq P_{\text{max}},\label{P4b} 
	\\&\sum_{k = 1}^{K} f_k \leq F_{\text{max}},\label{P4c}
	\\&\sum_{k = 1}^{K} R_k \leq R_{\text{max}}^{\text{U2S}},\label{P4d}
	\\ \begin{split}\label{P4e}&T_{k}^{\text{comp},*}\left( f_k, R_k \right)  +T_{k}^{\text{commu}}\le T_k,
	\\&\quad\quad\quad\quad\quad\quad\quad\quad\quad k = 1,2,\cdots,K,
	\end{split}
	\\&BR_k^{\text{U2G}}\left( p_k \right)  \geq \frac{\overline{e}_k \left( l_k \right)}{T_{k}^{\text{commu}}}, \quad  k = 1, 2, \cdots, K, \label{P4f}
	\end{align}
\end{subequations}
where $\mathbf{T}^{\text{commu}} = \left\{{T}^{\text{commu}}_{k}\right\}$. By regarding ${T}^{\text{commu}}_{k}$ as a variable and moving it to the left of \eqref{P4f}, we can clarify the convexity of \eqref{P4f}, based on the following lemma.

\begin{lemma}
	\label{lemma2}
	The function $f\left( x, y\right)  = \frac{1}{y}\log\left( 1+\frac{1}{x-a} \right) $ with $a\in \mathbb{R}^+$ is convex in the domain $\textbf{dom} f = \left\{ \left( x, y\right) | x>a, y>0 \right\}$.
\end{lemma}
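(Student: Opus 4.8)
\emph{Proof proposal.} The plan is to reduce the statement to a one‑parameter family of elementary convex functions via an integral representation of the logarithm. First I would eliminate the constant $a$: since $(x,y)\mapsto(x-a,y)$ is an affine bijection of $\textbf{dom}\,f$ onto $\{(u,y):u>0,\ y>0\}$ and convexity is invariant under affine reparametrization, it suffices to show that $g(u,y):=\frac{1}{y}\log\!\big(1+\frac{1}{u}\big)$ is convex on the open positive quadrant.

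Next I would use the identity $\log\!\big(1+\frac{1}{u}\big)=\log(u+1)-\log u=\int_{0}^{1}\frac{ds}{u+s}$, valid for $u>0$, to write $g(u,y)=\int_{0}^{1}\frac{ds}{y\,(u+s)}$. The point of this representation is that the integrand is, for every fixed $s\in[0,1]$, a function whose joint convexity in $(u,y)$ is transparent: $\frac{1}{y\,(u+s)}=\exp\!\big(-\log y-\log(u+s)\big)$, the exponent being the sum of a convex function of $y$ and a convex function of $u$ (namely $-\log$ precomposed with the affine map $u\mapsto u+s$), hence convex, and $\exp$ is convex and nondecreasing, so the composition is convex. (Equivalently, $\frac{1}{y(u+s)}$ is log‑convex, and log‑convex functions are convex.) Since convexity is preserved under nonnegative integration of a family of convex functions over the parameter, $g=\int_{0}^{1}\frac{ds}{y(u+s)}$ is convex, and therefore so is $f$.

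I do not expect a real obstacle here; the only non‑mechanical step is spotting the integral representation $\log(1+1/u)=\int_{0}^{1}(u+s)^{-1}\,ds=\int_{u}^{u+1}w^{-1}\,dw$. If one prefers to avoid that trick, the fallback is a direct Hessian computation: writing $\phi(u)=\log(1+1/u)$, one has $\phi'(u)=-\frac{1}{u(u+1)}<0$ and $\phi''(u)=\frac{2u+1}{u^{2}(u+1)^{2}}>0$; the Hessian of $g=\phi(u)/y$ has nonnegative diagonal entries $\phi''/y$ and $2\phi/y^{3}$ (both positive since $\phi>0$), and its determinant equals $\frac{2\phi\phi''-(\phi')^{2}}{y^{4}}$. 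Positive semidefiniteness then reduces to the scalar inequality $2(2u+1)\log(1+1/u)\ge 1$ for $u>0$; substituting $t=1/u$ this becomes $\log(1+t)\ge\frac{t}{2(t+2)}$, which holds because both sides vanish at $t=0$ and the derivative of their difference, $\frac{1}{1+t}-\frac{1}{(t+2)^{2}}$, is positive in view of $(t+2)^{2}=t^{2}+4t+4>t+1$. The integral‑representation route is the shorter one and sidesteps this computation entirely.
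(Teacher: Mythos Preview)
Your proof is correct, and both routes you sketch work. The paper, however, takes a quite different path: it shows convexity of $f$ by arguing that its epigraph $\{(x,y,z):f(x,y)\le z\}$ is a convex set, which it does by algebraically inverting the epigraph inequality to the equivalent form $x\ge\frac{1}{e^{yz}-1}+a$ and then checking, via an explicit Hessian computation, that the function $g(y,z)=\frac{1}{e^{yz}-1}+a$ is convex on $\{y>0,\,z>0\}$. Your integral-representation route is shorter and more structural: it writes $f$ as a nonnegative superposition of the manifestly log-convex functions $(u,y)\mapsto 1/\bigl(y(u+s)\bigr)$, so no Hessian is needed at all. Even your fallback Hessian computation is carried out on $\phi(u)/y$ directly rather than on the epigraph-transformed function, and it reduces to the one-variable inequality $\log(1+t)\ge t/\bigl(2(t+2)\bigr)$, which is arguably cleaner than the paper's determinant check $2we^{w}-e^{w}+2w+1>0$ for $w=yz>0$. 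The paper's approach has the minor pedagogical virtue of illustrating the epigraph characterization of convexity, but your argument is more direct and avoids the somewhat ad hoc step of solving the defining inequality for $x$.
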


\begin{proof}
	See Appendix \ref{appendix2}.
\end{proof}

Based on \textbf{Lemma \ref{lemma2}}, it can be shown that the right side of \eqref{P4f}, i.e., $\overline{e}_k \left( l_k \right)/T_{k}^{\text{commu}}$, is convex with respect to $l_k$ and $T_{k}^{\text{commu}}$. In addition, we have $R_k^{\text{U2G}}\left( p_k \right)$ is concave with respect to $p_k$. Therefore, constraint \eqref{P4f} describes a convex set. However, the function $T_{k}^{\text{comp},*}\!\left( f_k, R_k \right)$ in \eqref{P4e} is piece-wise and non-convex, which makes this problem still difficult to solve. Next, we propose an iterative algorithm to solve \eqref{P4} based on the SCA method~\cite{SCA}. Before proceeding further, we introduce the following lemma in order to approximate the non-convex function $T_{k}^{\text{comp},*}\!\left( f_k, R_k \right)$.

\begin{lemma}\label{lemma3}
	For the convex function $1/xy$ with $x>0$ and $y>0$, we have the following inequality for any $x_0>0$ and $y_0>0$
	\begin{align}\label{lemma3_1}
		\frac{1}{xy}&\geq \frac{1}{x_0 y_0}\left( 3-\frac{x}{x_0} -\frac{y}{y_0}\right).
	\end{align}
\end{lemma}

\begin{proof}
The convexity of $1/xy$ can be checked by checking the Hessian matrix. With the convexity of $1/xy$, we can obtain the inequality in \eqref{lemma3_1} immediately through the first-order condition of convex functions ~\cite[Section 3.1.3]{cvx}.
\end{proof}

Based on \textbf{Lemma \ref{lemma3}}, we have the following inequality by substituting $x = 1/u, y = au+bv, x_0 = 1/u_0, y_0 = au_0+bv_0$ into \eqref{lemma3_1}
\begin{equation}\label{lemma3_2}
\frac{u}{au+bv}\geq \frac{u_0}{au_0+bv_0}\left( 3-\frac{u_0}{u}-\frac{au+bv}{au_0+bv_0}\right),
\end{equation}
where $a>0$, $b>0$, $u>0$ and $v>0$.

With \eqref{lemma3_2}, we are ready to approximate $T_{k}^{\text{comp},*}\!\left( f_k, R_k \right)$ with a convex function. The approximate function $\overline{T}_{k}^{\text{comp},*}\left( f_k, R_k |f_{k0},R_{k0}\right)$ is formulated as \eqref{Tk_approx} on the next page, where $f_{k0}>0$ and $R_{k0}>0$ are fixed values, and $\overline{T}_{k}^{2}\left( f_k, R_k | f_{k0}, R_{k0}\right)$ and $\overline{T}_{k}^{3}\left( f_k, R_k | f_{k0}, R_{k0}\right)$ are two convex functions that approximate $T_{k}^{2}\left( f_k, R_k \right)$ and $T_{k}^{3}\left( f_k, R_k \right)$, respectively, formulated as \eqref{Tk2_hat} and \eqref{Tk3_hat} on the next page. The inequalities in \eqref{Tk2_approx} and \eqref{Tk3_approx} follow from \eqref{lemma3_2}, indicating that
\begin{align}
	&\overline{T}_{k}^{2}\left( f_k, R_k | f_{k0}, R_{k0}\right)\geq T_{k}^{2}\left( f_k, R_k \right),\label{Tk2_approx1}
	\\&	\overline{T}_{k}^{3}\left( f_k, R_k | f_{k0}, R_{k0}\right)\geq T_{k}^{3}\left( f_k, R_k \right).\label{Tk3_approx1}
\end{align}

\begin{figure*}[!t]
	\begin{equation}\label{Tk_approx}
		\overline{T}_{k}^{\text{comp},*}\left( f_k, R_k |f_{k0},R_{k0}\right) = \left\{
		\begin{aligned}
			&\max \left\{T_{k}^{1}\left( f_k, R_k \right), \overline{T}_{k}^{2}\left( f_k, R_k | f_{k0}, R_{k0}\right)\right\},\quad  \left( f_{k0},R_{k0} \right) \in \mathcal{S}_1 \cup\mathcal{S}_2\\
			&\overline{T}_{k}^{3}\left( f_k, R_k | f_{k0}, R_{k0}\right),\quad  \left( f_{k0},R_{k0} \right) \in \mathcal{S}_3\\
			&T_{k}^{4}\left( f_k,R_k \right),\quad \left( f_{k0},R_{k0} \right) \in \mathcal{S}_4
		\end{aligned}
		\right		.
	\end{equation}
	\hrulefill
\end{figure*}

\begin{figure*}[!t]
\begin{align}
\overline{T}_{k}^{2}\left( f_k, R_k | f_{k0}, R_{k0}\right)&\triangleq \frac{\rho \alpha D_k}{ \rho f_k + \left( \alpha - \beta \right) R_k}+4\frac{\alpha\tau}{\alpha-\beta}-\frac{\frac{4\rho \alpha\tau}{\alpha-\beta}f_{k0}}{\rho f_{k0} + \left( \alpha - \beta \right) R_{k0}}\left[3 - \frac{f_{k0}}{f_k}-\frac{ \rho f_k + \left( \alpha - \beta \right) R_k}{ \rho f_{k0} + \left( \alpha - \beta \right) R_{k0}}\right] \label{Tk2_hat}\\
&\geq \frac{\rho \alpha D_k}{ \rho f_k + \left( \alpha - \beta \right) R_k}+4\frac{\alpha\tau}{\alpha-\beta}-\frac{\frac{4\rho \alpha \tau}{\alpha -\beta}f_{k}}{\rho f_{k} + \left( \alpha - \beta \right) R_{k}} \label{Tk2_approx}\\
\overline{T}_{k}^{3}\left( f_k, R_k | f_{k0}, R_{k0}\right)&\triangleq\frac{\alpha D_k}{ f_k + \alpha R_k}+4\tau-\frac{4\tau f_{k0}}{f_{k0}+\alpha R_{k0}}\left[ 3-\frac{f_{k0}}{f_k}-\frac{f_k + \alpha R_k}{f_{k0} + \alpha R_{k0}}\right] \label{Tk3_hat}\\
&\geq \frac{\alpha D_k}{ f_k + \alpha R_k}+4\tau-\frac{4\tau f_{k}}{f_{k}+\alpha R_{k}}\label{Tk3_approx}
	\end{align}
		\hrulefill
\end{figure*}

It should be noted that the approximate function $\overline{T}_{k}^{\text{comp},*}\left( f_k, R_k |f_{k0},R_{k0}\right) $ is not a piece-wise function. Instead, the specific expression of $\overline{T}_{k}^{\text{comp},*}\left( f_k, R_k |f_{k0},R_{k0}\right) $ depends on the values of $f_{k0}$ and $R_{k0}$. We have the following lemma which illustrates the fundamental properties of $\overline{T}_{k}^{\text{comp},*}$.

\begin{lemma}
	\label{lemma4}
	The function $\overline{T}_{k}^{\text{comp},*}\left( f_k, R_k |f_{k0},R_{k0}\right) $ shown in \eqref{Tk_approx} is a convex function, and satisfies the following inequality
	\begin{equation}\label{lemma4_1}
	\overline{T}_{k}^{\text{comp},*}\left( f_k, R_k |\ f_{k0},R_{k0}\right)  \geq 	T_{k}^{\text{comp},*}\left( f_k, R_k \right),
	\end{equation}
	where $f_{k0}$ and $R_{k0}$ are non-negative constants, and the equality holds if $f_k = f_{k0}$, and $R_k = R_{k0}$. 
\end{lemma}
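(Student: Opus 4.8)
The plan is to establish convexity of $\overline{T}_{k}^{\text{comp},*}$ first, and then the majorization \eqref{lemma4_1} together with equality at the anchor $(f_{k0},R_{k0})$; in every case the argument splits according to which of $\mathcal{S}_1\cup\mathcal{S}_2$, $\mathcal{S}_3$, or $\mathcal{S}_4$ contains $(f_{k0},R_{k0})$, since this choice alone selects the branch of \eqref{Tk_approx}. For convexity I would note that on the domain $\{f_k>0,\,R_k\ge 0\}$ every elementary piece is convex: $T_k^{1}$ and $T_k^{4}$ are a positive constant over a positive affine function (possibly plus a constant); in $\overline{T}_{k}^{2}$ the leading term $\rho\alpha D_k/(\rho f_k+(\alpha-\beta)R_k)$ is again a constant over a positive affine function — whenever $\mathcal{S}_1\cup\mathcal{S}_2\neq\emptyset$ one has $\alpha-\alpha\rho-\beta\ge 0$, hence $\alpha-\beta>0$, so the affine form is positive — and, after expanding the bracket in \eqref{Tk2_hat}, the remaining correction is a strictly positive multiple of $1/f_k$ (convex) plus an affine term plus a constant; the same reasoning applies to $\overline{T}_{k}^{3}$ via \eqref{Tk3_hat}. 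A sum of convex functions is convex, and the pointwise maximum $\max\{T_k^{1},\overline{T}_{k}^{2}\}$ of two convex functions (the $\mathcal{S}_1\cup\mathcal{S}_2$ branch) is convex, so all three branches of \eqref{Tk_approx} are convex.

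For \eqref{lemma4_1} I would first invoke \eqref{Tk2_approx1}--\eqref{Tk3_approx1} to drop the bars, reducing the claim to showing, on the whole domain, that $\max\{T_k^{1},T_k^{2}\}\ge T_{k}^{\text{comp},*}$ (the $\mathcal{S}_1\cup\mathcal{S}_2$ branch), $T_k^{3}\ge T_{k}^{\text{comp},*}$ (the $\mathcal{S}_3$ branch), and $T_k^{4}\ge T_{k}^{\text{comp},*}$ (the $\mathcal{S}_4$ branch). The lever is that, as established in the derivation of Proposition \ref{prop1}, each $T_k^{i}$ is the completion time produced by an explicit feasible splitting — $T_k^{4}$ puts all data in Part 1 (feasible for every $(f_k,R_k)$); $T_k^{3}$ sets $D_{k,2}=0$ and balances Parts 1 and 3 (feasible when $f_k\le\alpha D_k/(4\tau)$); $T_k^{1}$ sets $D_{k,1}=0$ and balances Parts 2 and 3 with Part 2 internally balanced (feasible when $f_k\le\beta R_k/\rho$); $T_k^{2}$ sets $D_{k,3}=0$ and balances Parts 1 and 2 with Part 2 internally balanced (feasible when $f_k\ge\beta R_k/\rho$). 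Since $T_{k}^{\text{comp},*}$ is the minimum over all feasible splittings, each $T_k^{i}$ majorizes $T_{k}^{\text{comp},*}$ on its feasibility region; because $\{f_k\le\beta R_k/\rho\}\cup\{f_k\ge\beta R_k/\rho\}$ already exhausts the domain, the $\mathcal{S}_1\cup\mathcal{S}_2$ and $\mathcal{S}_4$ branches follow at once, and the only leftover is $T_k^{3}\ge T_{k}^{\text{comp},*}=T_k^{4}$ on $\mathcal{S}_4$, a one-line sign argument: on $\mathcal{S}_4$ one has $\alpha D_k-4\tau f_k\le 0$, and dividing a non-positive quantity by $f_k+\alpha R_k$ rather than by $f_k$ can only increase it, whence $T_k^{3}\ge T_k^{4}$.

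For equality at $(f_k,R_k)=(f_{k0},R_{k0})$ I would substitute into \eqref{Tk2_hat}--\eqref{Tk3_hat}: the bracket $3-f_{k0}/f_k-(\cdots)/(\cdots)$ collapses to $3-1-1=1$, so $\overline{T}_{k}^{2}(f_{k0},R_{k0}\mid f_{k0},R_{k0})=T_k^{2}(f_{k0},R_{k0})$ and $\overline{T}_{k}^{3}(f_{k0},R_{k0}\mid f_{k0},R_{k0})=T_k^{3}(f_{k0},R_{k0})$. Combining this with Proposition \ref{prop1} — $T_{k}^{\text{comp},*}=T_k^{3}$ on $\mathcal{S}_3$, $T_{k}^{\text{comp},*}=T_k^{4}$ on $\mathcal{S}_4$, and $T_{k}^{\text{comp},*}=\max\{T_k^{1},T_k^{2}\}$ on $\mathcal{S}_1\cup\mathcal{S}_2$ (the last because, in the derivation of Proposition \ref{prop1}, $T_k^{1}\ge T_k^{2}$ holds on $\mathcal{S}_1$ and $T_k^{2}\ge T_k^{1}$ holds on $\mathcal{S}_2$) — evaluating the matching branch of \eqref{Tk_approx} at the anchor returns exactly $T_{k}^{\text{comp},*}(f_{k0},R_{k0})$.

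I expect the global majorization in \eqref{lemma4_1} to be the main obstacle: the branch of \eqref{Tk_approx} is pinned to the region of $(f_{k0},R_{k0})$, yet one convex analytic expression must stay above the genuinely piecewise $T_{k}^{\text{comp},*}$ over the entire feasible set, including regions whose native piece is different. The feasible-splitting reading of $T_k^{1},\dots,T_k^{4}$ and the coverage bookkeeping above are precisely what make this tractable, and the delicate point is the comparisons on $\mathcal{S}_4$, where $T_{k}^{\text{comp},*}\le 4\tau$.
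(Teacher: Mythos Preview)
Your convexity and equality-at-anchor arguments match the paper's in substance (you are actually more careful than the paper, which somewhat glibly calls $\overline{T}_k^{2}$ and $\overline{T}_k^{3}$ ``reciprocal functions of a linear combination''). Where you diverge is in establishing the three global bounds $\max\{T_k^1,T_k^2\}\ge T_k^{\text{comp},*}$, $T_k^3\ge T_k^{\text{comp},*}$, $T_k^4\ge T_k^{\text{comp},*}$: the paper does this by brute algebra, computing the pairwise differences \eqref{AppendixC_1_1}--\eqref{AppendixC_1_4} as explicit rational functions and reading off sign relations region by region (their \eqref{AppendixC_2}, then \eqref{AppendixC_3}). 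Your operational reading --- each $T_k^i$ is the completion time of a particular feasible splitting, hence at least $T_k^{\text{comp},*}$ --- is more conceptual and sidesteps most of that algebra; it also explains \emph{why} the inequalities hold rather than merely verifying them.

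One gap to close: your stated feasibility region for the $T_k^2$ splitting is $\{f_k\ge\beta R_k/\rho\}$, but balancing Parts~1 and~2 with $D_{k,3}=0$ also needs $D_{k,2}\ge 0$, and solving the balance equations gives $D_{k,2}\propto \rho\alpha D_k-4\rho\tau f_k+4\beta\tau R_k$, which forces the additional upper bound $f_k\le \alpha D_k/(4\tau)+\beta R_k/\rho$. Above that bound --- a subset of $\mathcal{S}_4$ --- the formula $T_k^2$ no longer corresponds to a genuine feasible splitting, so feasibility alone does not yield $T_k^2\ge T_k^{\text{comp},*}$ there. The patch is immediate and you already have the ingredient: on all of $\mathcal{S}_4$ one has $T_k^1=\beta D_k/[\beta R_k+(1-\rho)f_k]+4\tau>4\tau>\alpha D_k/f_k=T_k^{\text{comp},*}$, so $\max\{T_k^1,T_k^2\}\ge T_k^1>T_k^{\text{comp},*}$ regardless of what $T_k^2$ does. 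Add this one-line observation (mirroring your $T_k^3\ge T_k^4$ sign check on $\mathcal{S}_4$) and the argument is complete.
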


\begin{proof}
	See Appendix \ref{appendix3}.
\end{proof}

By approximating $T_{k}^{\text{comp},*}\!\!\left( f_k, R_k \right)$ with $\overline{T}_{k}^{\text{comp},*}\left( f_k, R_k |f_{k0},R_{k0}\right) $, we propose an iterative algorithm to obtain a sub-optimal solution to problem \eqref{P4}.  During each iteration, we solve an approximate optimization problem of \eqref{P4}, formulated as
\begin{subequations}\label{P5}
	\begin{align}		\small
	\min_{\mathbf{p},\mathbf{f}', \mathbf{R}', \mathbf{l}, \mathbf{T}^{\text{commu}}} \ &\sum_{k = 1}^{K} l_k  \label{P5a} \\ 
		\mbox{\textit{s.t.}}\  & \sum_{k = 1}^{K} p_k \leq P_{\text{max}},\label{54b} 
	\\&\sum_{k = 1}^{K} f_k \leq F_{\text{max}},\label{P5c}
	\\&\sum_{k = 1}^{K} R_k \leq R_{\text{max}}^{\text{U2S}},\label{P5d}\\
 \begin{split}\label{P5e}	& \overline{T}_{k}^{\text{comp},*}\!\left( f_k, R_k |f^{\left( i-1\right) }_{k},R^{\left( i-1\right) }_{k}\right) \! + \! T_{k}^{\text{commu}}\! \le \! T_k,
	\\&\quad\quad\quad\quad\quad\quad\quad\quad\quad\quad\quad k = 1,2,\cdots,K,
	\end{split}\\
	&BR_k^{\text{U2G}}\left( p_k \right)  \geq \frac{\overline{e}_k \left( l_k \right)}{T_{k}^{\text{commu}}}, \quad  k = 1, 2, \cdots, K, \label{P5f}
	\end{align}
\end{subequations}
where $i$ denotes the iteration index, and $f^{\left( i-1\right) }_{k}$ and $R^{\left( i-1\right) }_{k}$ denote the solutions in the $\left( i-1\right) $-th iteration. As $\overline{T}_{k}^{\text{comp},*}\left( f_k, R_k |f_{k0},R_{k0}\right)$ is a convex function, it can be proven that problem \eqref{P5} is a convex optimization problem, which can be solved efficiently with convex optimization toolboxes\cite{cvx}.

\begin{algorithm}[t]\label{Algo1}
	\caption{The proposed iterative algorithm for solving problem \eqref{P4}}
	\SetKwInOut{Input}{Input}\SetKwInOut{Output}{Output}\SetKwInOut{Initialize}{Initialization}
	\Input {System parameter $P_{\text{max}}$, $F_{\text{max}}$, $R_{\text{max}}^{\text{U2S}}$, etc; the convergence tolerance $\epsilon$.}
	\Initialize {Calculate a feasible $\mathbf{f}'^0$ and $\mathbf{R}'^0$ based on (\ref{P5c}) and (\ref{P5d}), and set $i = 0$}
	\Repeat{$ \frac{L^{i-1}-L^{i}}{L^{i-1}} <\epsilon$}{
		Set $i = i+1$\;
		Update $\mathbf{p}^i$, $\mathbf{f}'^i$ and $\mathbf{R}'^i$ by solving \eqref{P5}, denote the value of the objective function as $L^i$\; }
	\Output  {the optimal resource allocation $\mathbf{p}^i$, $\mathbf{f}'^i$, $\mathbf{R}'^i$, and the sum LQR cost $L^i$.}
\end{algorithm}

By solving the optimization problem in \eqref{P5} iteratively, we propose \textbf{Algorithm \ref{Algo1}} to solve \eqref{P4}. The convergence of this algorithm can be demonstrated with the following proposition.
\begin{proposition}\label{prop2}
	The output solution of \textbf{Algorithm \ref{Algo1}} is a feasible solution to the optimization problem in \eqref{P4}. In addition, $L^i$ in \textbf{Algorithm \ref{Algo1}} is non-increasing along with the iterations, i.e., $L^{i-1}\geq L^i$ holds for any $i> 1$. Therefore, \textbf{Algorithm \ref{Algo1}} is assured to converge.
\end{proposition}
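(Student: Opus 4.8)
The plan is to establish the two claimed properties of \textbf{Algorithm \ref{Algo1}} separately, and then conclude convergence from monotonicity and boundedness. The key structural tool is \textbf{Lemma \ref{lemma4}}: the surrogate $\overline{T}_{k}^{\text{comp},*}(f_k,R_k\,|\,f_{k0},R_{k0})$ is convex, globally upper-bounds the true $T_{k}^{\text{comp},*}(f_k,R_k)$, and is tight at the linearization point $(f_{k0},R_{k0})$. This is exactly the majorization property needed for an SCA/MM argument, so the proof is a textbook instance of that scheme; the main work is checking that the feasible-set relations go the right way.

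First I would argue feasibility. At iteration $i$ the algorithm solves \eqref{P5}, whose constraint \eqref{P5e} uses $\overline{T}_{k}^{\text{comp},*}(f_k,R_k\,|\,f^{(i-1)}_k,R^{(i-1)}_k)$ in place of $T_{k}^{\text{comp},*}(f_k,R_k)$. Since by \eqref{lemma4_1} we have $\overline{T}_{k}^{\text{comp},*}(f_k,R_k\,|\,\cdot,\cdot)\ge T_{k}^{\text{comp},*}(f_k,R_k)$ pointwise, any $(\mathbf{p},\mathbf{f}',\mathbf{R}',\mathbf{l},\mathbf{T}^{\text{commu}})$ feasible for \eqref{P5} satisfies $T_{k}^{\text{comp},*}(f_k,R_k)+T_k^{\text{commu}}\le \overline{T}_{k}^{\text{comp},*}(\cdots)+T_k^{\text{commu}}\le T_k$, hence satisfies \eqref{P4e}; the remaining constraints \eqref{54b}--\eqref{P5d} and \eqref{P5f} are identical to those of \eqref{P4}. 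Therefore the feasible set of \eqref{P5} is a subset of that of \eqref{P4}, and in particular the output $(\mathbf{p}^i,\mathbf{f}'^i,\mathbf{R}'^i,\dots)$ is feasible for \eqref{P4}. I should also note the iteration is well-posed: the point $(\mathbf{f}'^{(i-1)},\mathbf{R}'^{(i-1)})$ together with a suitable $\mathbf{T}^{\text{commu}}$ and $\mathbf{l}$ (obtained from the previous solution) is feasible for \eqref{P5} by the tightness clause of \textbf{Lemma \ref{lemma4}}, so \eqref{P5} is never infeasible and the convex solver returns a point.

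Next I would prove monotonicity of $L^i$. Let $(\mathbf{p}^{i-1},\mathbf{f}'^{(i-1)},\mathbf{R}'^{(i-1)},\mathbf{l}^{i-1},\mathbf{T}^{\text{commu},i-1})$ be the optimizer from iteration $i-1$. Evaluated at its own linearization point $(f^{(i-1)}_k,R^{(i-1)}_k)$, \textbf{Lemma \ref{lemma4}} gives $\overline{T}_{k}^{\text{comp},*}(f^{(i-1)}_k,R^{(i-1)}_k\,|\,f^{(i-1)}_k,R^{(i-1)}_k)=T_{k}^{\text{comp},*}(f^{(i-1)}_k,R^{(i-1)}_k)$, so this previous solution is feasible for the $i$-th subproblem \eqref{P5} and attains objective value $L^{i-1}$. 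Since \eqref{P5} is a convex minimization and $L^i$ is its optimal value, $L^i\le L^{i-1}$. Combined with the fact that the objective $\sum_k l_k$ is bounded below — each $l_k\ge l_{\text{min},k}$ because $\overline{e}_k(l_k)$ is only defined/finite and the entropy constraint \eqref{P5f} only satisfiable for $l_k>l_{\text{min},k}$ — the sequence $\{L^i\}$ is non-increasing and bounded below, hence convergent; the stopping rule $\frac{L^{i-1}-L^i}{L^{i-1}}<\epsilon$ is therefore triggered in finitely many steps.

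The main obstacle I anticipate is not the MM bookkeeping but the subtlety in the feasibility/well-posedness step around the piecewise structure of $\overline{T}_{k}^{\text{comp},*}$ in \eqref{Tk_approx}: its very \emph{form} changes with which region $\mathcal{S}_1\cup\mathcal{S}_2$, $\mathcal{S}_3$, or $\mathcal{S}_4$ the point $(f^{(i-1)}_k,R^{(i-1)}_k)$ falls in. I would need to confirm that regardless of the region, (i) the resulting $\overline{T}_{k}^{\text{comp},*}$ is convex in $(f_k,R_k)$ (given by \textbf{Lemma \ref{lemma4}}), (ii) the global upper-bound and tightness properties hold (also \textbf{Lemma \ref{lemma4}}), so the monotonicity argument above goes through uniformly, and (iii) the previous-iterate remains feasible even though it may sit on a region boundary — which follows because tightness holds at $(f^{(i-1)}_k,R^{(i-1)}_k)$ in every case. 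Once these are in hand, no extra argument (e.g. a convergence-of-iterates or KKT-stationarity claim) is required, since the proposition asserts only feasibility of the output and convergence of the objective sequence.
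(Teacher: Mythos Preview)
Your proposal is correct and follows essentially the same SCA/MM argument as the paper: feasibility of the iterate in \eqref{P4} comes from the global upper-bound \eqref{lemma4_1}, and monotonicity of $L^i$ comes from the tightness clause of \textbf{Lemma \ref{lemma4}} at the linearization point, which makes the previous iterate feasible for the current subproblem. The only minor differences are that (i) the paper establishes feasibility of the $(i-1)$-th iterate in the $i$-th subproblem by chaining through the $(i-2)$-th surrogate constraint \eqref{P5e}, whereas you use the already-proved feasibility in \eqref{P4} directly---both work; and (ii) you explicitly invoke the lower bound $l_k>l_{\text{min},k}$ to conclude convergence from monotonicity, which the paper leaves implicit.
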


\begin{proof}
	See Appendix \ref{AppendixD}.
\end{proof}

The computational complexity of the proposed \textbf{Algorithm \ref{Algo1}} to solve \eqref{P4} is dominated by the process of solving \eqref{P5} during the iterations. As \eqref{P5} is a convex optimization problem, it can be solved with interior point method~\cite{cvx}. The complexity of the interior point method is $\mathcal{O}\left( K^{3.5} \log  \left( 1/\epsilon_0 \right) \right) $, where $\epsilon_0$ is the solution accuracy of the interior point method\cite{complexity}. Denoting the iteration number of \textbf{Algorithm \ref{Algo1}} as $I_1$, the overall computational complexity of the proposed algorithm is $ \mathcal{O}\left(I_1 K^{3.5} \log  \left( 1/\epsilon_0 \right) \right)$. In the next section, we will evaluate the iteration numbers via simulations.

\section{Simulation Results and Discussion}
\label{sec_simulation}
In this section, we provide simulation results to evaluate our proposed algorithm. We consider an EIH-empowered $\textbf{SC}^3$ system where the EIH assists $K = 5$ robots for their control tasks. The locations of the robots are assumed to be randomly distributed in a circular area with a radius of $5000$ m. The UAV is located in the center of the circle, with the height of $100$ m. The bandwidth of each channel is set as $B = 5$kHz, and other parameters are set as $\beta_0 = -60$ dB and $\sigma^2  =-110$ dBm~\cite{Hua2018}. We assume a Low Earth Orbit (LEO) satellite, with the height of $1500$ km, and hence $\tau = 5$ ms. The power constraint and the satellite-backhaul rate constraint are set as $P_{\text{max}} = 10$ dBW and $R_{\text{max}}^{\text{U2S}} = 50$ Mbps unless specified otherwise. For the computing parameters, we set $\alpha = 100$ CPU cycles/bit, $\beta = 50$ CPU cycles/bit, and $D_k = 300$ kilobits. Unless specified otherwise, the maximal CPU frequency of the MEC server is set as $F_{\text{max}} = 5$ GHz, and the data compression ratio is set as $\rho = 0.2$.

For control parameters, unless specified otherwise, the state matrices $\mathbf{A}_{k}$ are assumed to be $50\times 50$ diagonal matrices with diagonal elements randomly selected in $\left[ -10, 10\right] $. The control system noise and sensing noise are assumed to be independent Gaussian random variables with zero means and covariance matrices $\mathbf{\Sigma}^{\text{V}}_k = \sigma^2_{\text{V},k}\times\mathbf{I}_n$ and $\mathbf{\Sigma}^{\text{W}}_k = \sigma^2_{\text{W},k}\times\mathbf{I}_n$, where $n = 50$,  $\sigma^2_{\text{V},k} =\sigma^2_{\text{V}} =  0.01$ and $\sigma^2_{\text{W},k} = \sigma^2_{\text{W}} = 0.001$ unless specified otherwise. The time constraint of each loop is set as $70$ ms. The observation matrices are set as identity matrices, and the LQR weight matrices are $\mathbf{Q}_k =  \mathbf{I}_n, \mathbf{R}_k = \mathbf{0}$.

All the simulations are implemented in MATLAB R2021b, and the convex optimization problem  is solved with the fmincon function of the Optimization Toolbox\cite{fmincon}. The interior-point algorithm is used for the fmincon function to solve the convex optimization problems, and the optimality tolerance parameter of the fmincon function is set as $10^{-8}$. The  convergence tolerance threshold is set ot be $\epsilon = 5\times 10^{-5}$. As a suitable initial point, we distribute the computing capability and satellite-backhaul rate equally, i.e., $\mathbf{f}'^0 = \left[ F_{\text{max}}/K, \cdots, F_{\text{max}}/K\right]$, and $\mathbf{R}'^0 = \left[ R_{\text{max}}/K, \cdots, R_{\text{max}}/K\right]$

In order to evaluate the performance of the proposed algorithm, we compare it with the following benchmarks through simulation.
\begin{itemize}
	\item Closed-loop-oriented power allocation: allocating the EIH transmit power to robots, aiming to minimize the sum LQR cost as in \cite{wcl}, where the computing capability and the satellite-backhaul rate are allocated equally to the loops.  
	\item Communication-oriented scheme: allocating the computing capability of MEC server, aiming to minimize the sum computation time\cite{MEC5}, with the satellite-backhaul rate equally allocated to the loops, and the transmit power of the EIH is allocated to maximize the downlink data throughput.
\end{itemize}

Fig. \ref{simu4} verifies the convergence performance of the proposed algorithm. Ten snapshots with different robot locations are evaluated, where the transmit power constraint is set as $P_{\text{max}} = 10$ dBW. This figure shows that our proposed algorithm can converge within three iterations, confirming its efficiency in practical applications.

\begin{figure} [t]
	\centering
	\includegraphics[width=\linewidth]{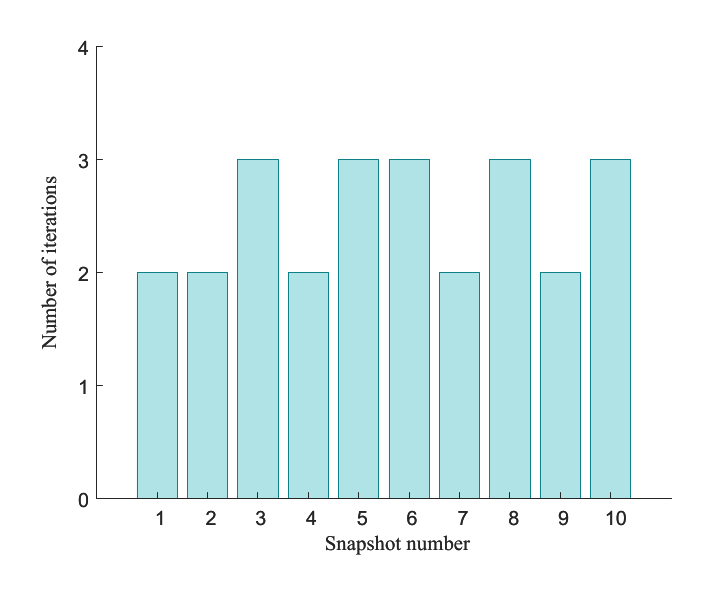}
	\caption{Convergence performance of the proposed scheme.}
	\label{simu4}
\end{figure}

\begin{figure} [t]
	\centering
	\includegraphics[width=\linewidth]{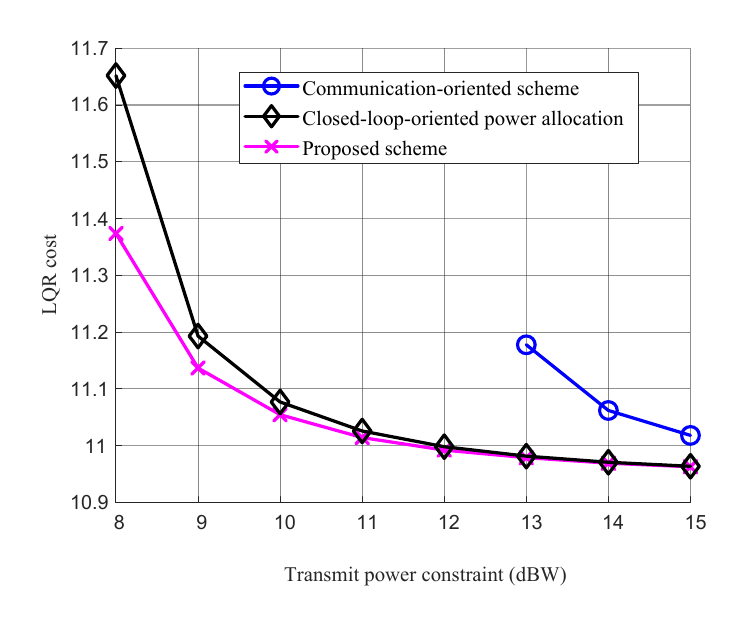}
	\caption{The LQR cost achieved with different transmit power constraints.}
	\label{simu1}
\end{figure}

In Fig. \ref{simu1}, we compare the LQR cost achieved by the above three schemes with different transmit power constraints. From this figure, it is seen that the communication-oriented scheme achieves the worst closed-loop performance. Particularly, the system with the communication-oriented scheme will be unstable when the transmit power constraint is below $12$ dBW, leading to an infinite LQR cost. The proposed scheme achieves the lowest LQR cost. In addition, it is shown that the LQR cost is decreasing with respect to the transmit power constraint, which indicates that improving the communication capability is beneficial for the overall closed-loop performance. However, when the transmit power is sufficiently large, the LQR cost becomes saturated, and the rate of decrease slows.

\begin{figure} [t]
	\centering
	\includegraphics[width=\linewidth]{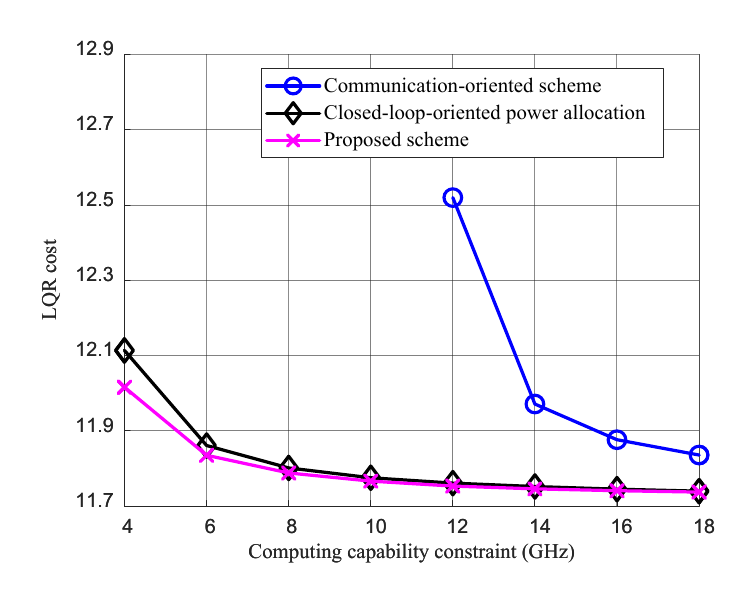}
	\caption{The LQR cost achieved with different computing capability constraints.}
	\label{simu2}
\end{figure}

\begin{figure} [t]
	\centering
	\includegraphics[width=\linewidth]{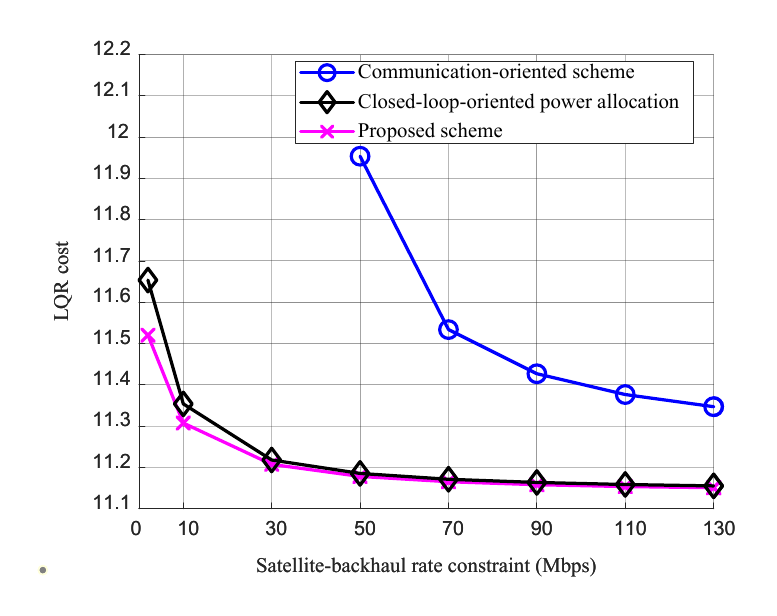}
	\caption{The LQR cost achieved with different satellite-backhaul rate constraints.}
	\label{simu3}
\end{figure}

In Figs. \ref{simu2} and \ref{simu3}, we show the LQR cost with different computing capability constraints $F_{\text{max}}$ and satellite-backhaul rate constraints $R_{\text{max}}$, respectively. We can see that the proposed scheme outperforms the other two schemes under all conditions. Similar to Fig. \ref{simu1}, it is shown that the LQR cost is decreasing with respect to the $F_{\text{max}}$ and $R_{\text{max}}$, indicating that increasing the computing capability can improve the closed-loop control performance of $\textbf{SC}^3$ loops.

\begin{figure} [t]
	\centering
	\includegraphics[width=\linewidth]{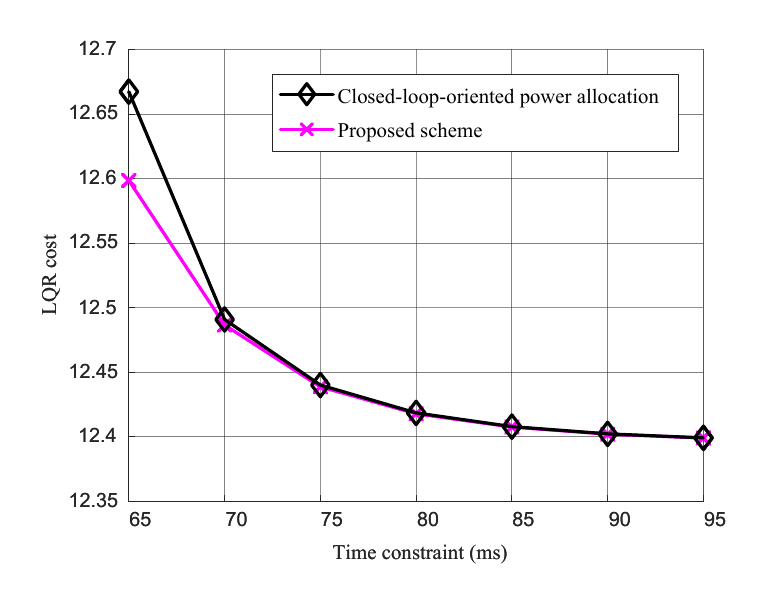}
	\caption{The LQR cost achieved with different time constraints.}
	\label{simu8}
\end{figure}

We show the impact of the time constraints $T_k$ on the LQR cost in Fig. \ref{simu8}, where the time constraints of each $\textbf{SC}^3$ loop are set to be the same for better readability. The maximal CPU frequency is set as $F_{\text{max}}=3$ GHz. We can see that the proposed scheme can achieve a lower LQR cost than the power allocation scheme, especially when the time resources are limited, thereby showing the superiority of the proposed scheme. They become similar when the time resources are saturated, i.e., $T_k$ is larger than $75$ ms. In addition, it can be seen that the LQR cost decreases with the time constraint, which indicates that increasing the time for communication and computing is an efficient way to improve the control performance. The reason is that more precise control commands can be transmitted given more time for transmission and computation.

\begin{figure} [t]
	\centering
	\includegraphics[width=\linewidth]{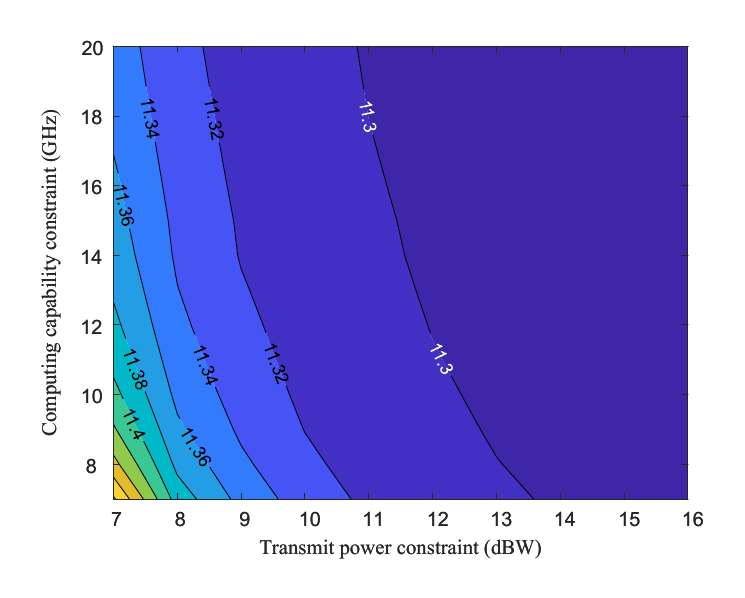}
	\caption{LQR cost achieved with different transmit power and computing capability constraints.}
	\label{simu5}
\end{figure}

Fig. \ref{simu5} shows how the LQR cost is influenced by the transmit power constraints and computing capability constraints with the proposed scheme. It is shown that the LQR cost decreases with both the transmit power and computing capability constraints. However, the contours become sparse as the transmit power constraint or computing capability constraint increases, indicating diminishing marginal returns with respect to transmit power and computing capability. In addition, it can be seen that even for the computing capability with a high value, the LQR cost is still restricted by a lower bound that is determined by the maximum power. The reason is that the transmit time $T_{k}^{\text{commu}}$ cannot increase infinitely with the increased computing capability, leading to an upper bound of the transmitted information entropy. The finite information entropy determines the bound of LQR cost $l_k$, as shown in the constraint \eqref{P1g}.

\begin{figure} [t]
	\centering
	\includegraphics[width=\linewidth]{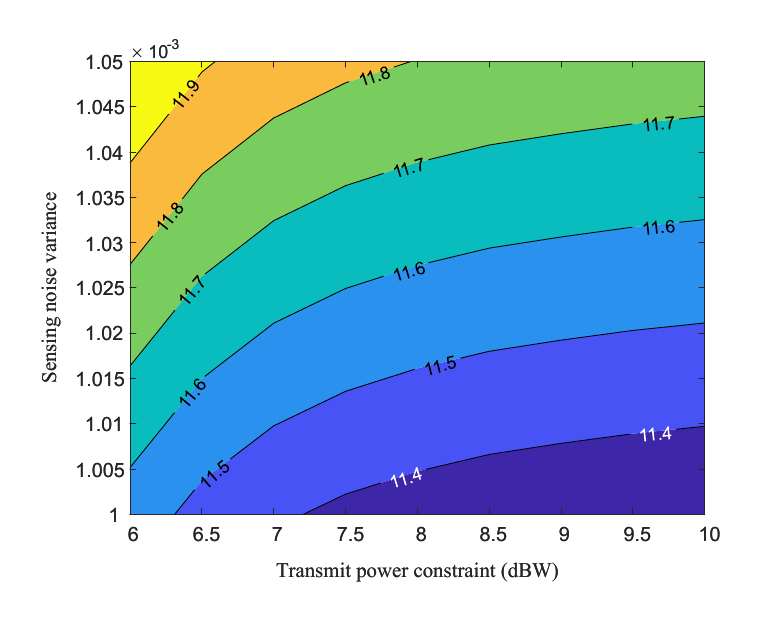}
	\caption{LQR cost achieved with different transmit power constraints and sensing noise variances.}
	\label{simu6}
\end{figure}

To show the joint influence of the sensing and communication capability on the closed-loop performance, we show the contours of the sum LQR cost with respect to the transmit power constraint $P_{\text{max}}$ and the sensing noise variance $\sigma^2_{\text{V}}$ in Fig. \ref{simu6}. It is shown that the increase of the sensing noise variance will cause the degradation of the control performance, leading to a higher LQR cost. This degradation can be compensated partially by enhancing the communication capability, i.e., increasing $P_{\text{max}}$. However, even if the transmit power constraint becomes high enough, the LQR cost will still be bounded by the lower bound $l_{\text{min},k}$, which will also be increased by the sensing part. This result shows that only enhancing the communication capability on the EIH cannot fully compensate for the poor sensing capability in the $\textbf{SC}^3$ loop.

\begin{figure} [t]
	\centering
	\includegraphics[width=\linewidth]{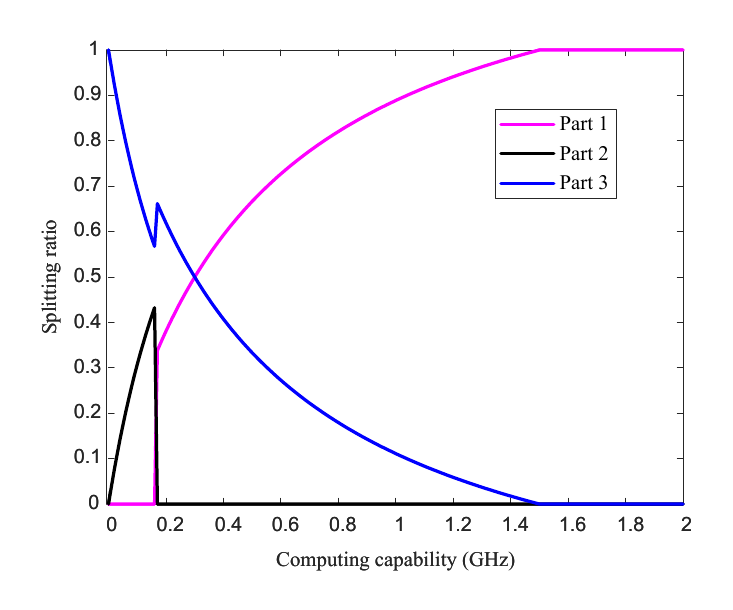}
	\caption{Optimal splitting ratio of the different parts of sensor data with different computing capabilities.}
	\label{simu7}
\end{figure}

Fig. \ref{simu7} shows the optimal data splitting vector of the sensor data in  $\textbf{SC}^3$ loop $k$ with different computing capabilities $f_k$ based on \textbf{Proposition \ref{prop1}}, where $R_k = 50$ Mbps and $\rho = 0.25$. It can be seen that when the local computing capability is low, most sensor data will be transmitted to the cloud server for processing (Part 3), and the local computing capability will be fully used on pre-processing the sensor data (Part 2). As the computing capability increases, the sensor data will be either processed locally in the MEC server (Part 1) or in the cloud server (Part 3). Finally, when the local computing capability on the EIH is large enough, all the sensor data will be processed locally, and we have $D_{k,1} = D_k$, and $D_{k,2} = D_{k,3} = 0$.

\begin{figure} [t]
	\centering
	\includegraphics[width=\linewidth]{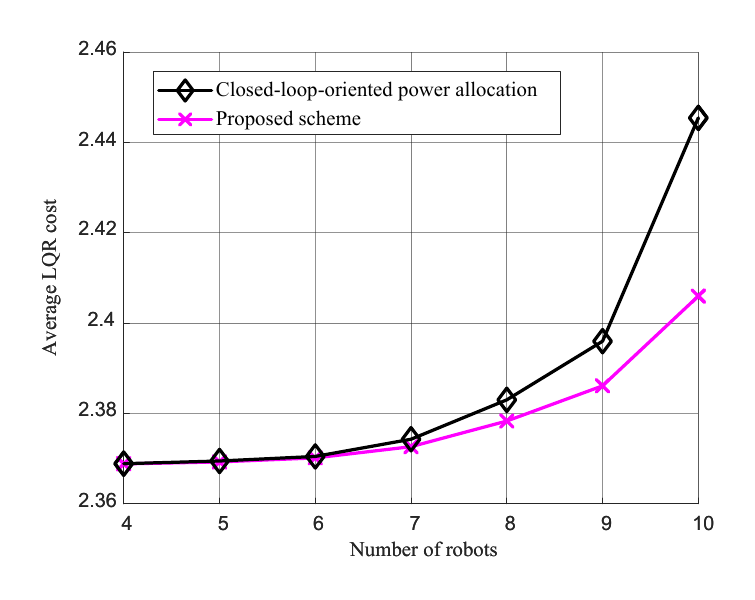}
	\caption{The average LQR cost achieved with different numbers of robots.}
	\label{simu9}
\end{figure}

In Fig. \ref{simu9}, we show the average LQR cost of each $\textbf{SC}^3$ loop with different numbers of robots, i.e., $K$. For ease of comparison, the control parameters of each $\textbf{SC}^3$ loop are set to be identical. It can be seen that the average LQR cost increases rapidly with the number of robots. The reason is that the communication and computing resources allocated to each $\textbf{SC}^3$ loop will decrease as the number of robots increases. This indicates that an efficient allocation of resources is important in practical scenarios where there are many robots working together. It is worth noting that when the number of robots increases, different robots will cooperate to complete a common task, and the coupling between different $\textbf{SC}^3$ loops will be interesting and non-negligible, which will be left to the future study.

\section{Conclusion}
\label{sec_conclusion}
In this paper, we investigated an EIH-empowered satellite-UAV network serving multiple robots for their control tasks. The UAV-mounted EIH is integrated with sensing, computing, and communication modules. It is capable of directing the behaviors of robots, via synergistic $\textbf{SC}^3$ closed-loop orchestration. In order to explore the potential of closed-loop optimization, we formulated a sum LQR cost minimization problem that jointly optimized the splitting of sensor data, the computing capability, the satellite-backhaul rate, and the transmit power from EIH to robots. An iterative algorithm was proposed to solve this non-convex optimization problem. Simulation results demonstrated the superiority of the proposed algorithm. Moreover, we have shown the joint influences of the sensing, communication, and computing capability on the sum LQR cost, to uncover a more systematic understanding of closed-loop controls.

\appendices
\section{Proof of Proposition \ref{prop1}}\label{Appendix1}
	First, if $f_{k}\geq\frac{\alpha D_k}{4\tau}$, all the sensor data should be processed locally in the MEC server of the EIH, and the overall computation time is $T_{k}^{\text{comp}} = T_{k,1}^{\text{comp}} = \frac{\alpha D_k}{f_k}<4\tau$. Otherwise, if some sensor data are transmitted to the cloud server through satellite, i.e., $D_{k,2}>0$ or $D_{k,3}>0$, the computation time will be larger than the sum propagation delay $4\tau$, as shown in \eqref{Tk2} and \eqref{Tk3}.
	
	Next, we consider the case of $f_{k}<\frac{\alpha D_k}{4\tau}$. Based on \eqref{Tk1}, \eqref{Tk2} and \eqref{Tk3}, we can see that the latency is non-increasing with respect to the $f_{k,1}$, $f_{k,2}$, $R_{k,2}$ and $R_{k,3}$. Therefore, the equality of \eqref{P3c} and \eqref{P3d} must hold to minimize the overall computation time, i.e,
	\begin{align}
		&f_{k,1}+f_{k,2} = f_k,\label{Appendix1_1}
		\\&R_{k,2}+R_{k,3} = R_k\label{Appendix1_2}.
	\end{align}
	Therefore, we have six equations in total, as shown in \eqref{Dk}, \eqref{lemma1_1}, \eqref{lemma1_2}, \eqref{Appendix1_1}, and \eqref{Appendix1_2}. Based on these equations, next, we will remove the other variables and express the objective function of \eqref{P3} as a function of $f_{k,2}$. 

	From \eqref{lemma1_2}, \eqref{Appendix1_1}, and \eqref{Appendix1_2} , we have $f_{k,1} = f_k-f_{k,1}$, $R_{k,2} = \frac{\rho}{\beta}f_{k,2}$ and $R_{k,3} = R_k-\frac{\rho}{\beta}f_{k,2}$. Substituting these variables into \eqref{Dk} and \eqref{lemma1_1}, we can get the relation between $T_k^{\text{comp}}$ and $f_{k,2}$, formulated as
\begin{align}\label{Tcompk_fk2}
T_k^{\text{comp}} = \frac{\alpha \beta D_k-4\beta\tau f_k + 4\beta\tau f_{k,2}}{\left( \alpha - \alpha\rho - \beta\right) f_{k,2} + \beta f_k + \alpha \beta R_k}+4\tau.
\end{align}

From \eqref{Tcompk_fk2}, we see that $T_k^{\text{comp}}$ is a fractional linear function of $f_{k,2}$. Therefore, $T_k^{\text{comp}}$ is monotonous with respect to $f_{k,2}$. It can be shown that $T_k^{\text{comp}}$ is increasing with respect to $f_{k,2}$ if $4\beta \tau R_k-\left( \alpha-\alpha\rho -\beta \right) D_k + 4\left( 1-\rho \right) \tau f_k \geq 0$, and $T_k^{\text{comp}}$ is decreasing otherwise.

Based on the above analysis, if $4\beta \tau R_k-\left( \alpha-\alpha\rho -\beta \right) D_k + 4\left( 1-\rho \right) \tau f_k \geq 0$, then $f_{k,2}$ should be as small as possible in order to minimize $T_k^{\text{comp}}$. Therefore, we have $f_{k,2} = 0$ in such case. Substituting $f_{k,2} = 0$ into \eqref{Tcompk_fk2}, we have
\begin{align}\label{Tcompk_fk2_1}
T_k^{\text{comp}} = \frac{\alpha D_k-4\tau f_k}{ f_k + \alpha R_k}+4\tau.
\end{align}

On the other hand, if $4\beta \tau R_k-\left( \alpha-\alpha \rho -\beta \right) D_k + 4\left( 1-\rho \right) \tau f_k \leq 0$, then $f_{k,2}$ should be as large as possible. There are two constraints for $f_{k,2}$, i.e., $f_{k,2}\leq f_k$ and $R_{k,2} = \frac{\rho}{\beta}f_{k,2}\leq R_k$, which indicates that $f_{k,2} = \min\left\{ {f_k, \frac{\beta R_k}{\rho}}\right\}$. Therefore, when $f_k \leq \frac{\beta R}{\rho}$, we have $f_{k,2} = f_{k}$, and
\begin{align}\label{Tcompk_fk2_2}
T_k^{\text{comp}} = \frac{\beta D_k}{ \beta R_k + \left( 1-\rho  \right) f_k}+4\tau.
\end{align}

On the other hand, if $f_k \geq \frac{\beta R}{\rho}$, then $f_{k,2} = \frac{\beta R_k}{\rho}$, $R_{k,2} = R_k $, and
\begin{align}\label{Tcompk_fk2_3}
T_k^{\text{comp}} = \frac{\rho \alpha D_k - 4\rho \tau f_k + 4\beta R_k \tau}{ \rho f_k + \left( \alpha - \beta \right) R_k}+4\tau.
\end{align}

Based on the above analysis, the validity of \textbf{Proposition \ref{prop1}} has been demonstrated.

\section{Proof of Lemma \ref{lemma2}}
\label{appendix2}
	According to \cite[Page 89]{cvx}, a function is convex if and only if its epigraph is a convex set, where the epigraph of $f\left( x, y \right) $ is defined as
$\textbf{epi} f = \left\{ (x,y,z) |(x,y)\in \textbf{dom} f, f\left( x, y\right) \le z \right\}$.

In order to show the convexity of $f\left( x, y \right)$, we will show that its epigraph $\mathcal{A}$ is a convex set, which can be formulated as
\begin{equation}\label{Lemam1_2_a}
\mathcal{A} = \left\{ (x,y,z) |\quad\frac{1}{y}\log\left( 1+\frac{1}{x-a} \right) \le z, x>a, y>0 \right\}.
\end{equation}
We can transform $\mathcal{A}$ to an equivalent form as
\begin{equation}\label{Lemma1_3_b}
\mathcal{A} = \left\{ (x,y,z) |\quad \frac{1}{\exp\left( yz\right) -1 }+a\le x, y>0, z>0 \right\}.
\end{equation}
From \eqref{Lemma1_3_b}, it can be seen that $\mathcal{A}$ can be regarded as the epigraph of a new function $g\left( y, z\right)  = \frac{1}{\exp\left( yz\right) -1 }+a$ with the domain $y>0,z >0$. 

Next, we show that $g\left( y, z\right)$ is convex by checking its Hessian matrix, which is calculated as
\begin{equation}
\nabla ^2 g\left( y, z \right)  = 
\begin{bmatrix}
\frac{z^2 e^{yz} \left( e^{yz} +1 \right) }{\left( e^{yz} -1\right)^3 }& \frac{e^{yz} \left( yz-e^{yz} +yz e^{yz}+1 \right) }{\left( e^{yz} -1\right)^3 }\\
\frac{e^{yz} \left( yz-e^{yz} +yz e^{yz}+1 \right) }{\left( e^{yz} -1\right)^3 }&\frac{y^2 e^{yz} \left( e^{yz} +1 \right) }{\left( e^{yz} -1\right)^3 }
\end{bmatrix}.
\end{equation} 
The determinant of $\nabla ^2 g\left( y, z \right)$ can be calculated as
\begin{align}
|\nabla ^2 g\left( y, z \right)|  =  \frac{e^{2yz} \left( 2yze^{yz}-e^{yz}+2yz+1\right) }{\left( e^{yz} -1\right)^5}.
\end{align}

It can be shown that $2yze^{yz}-e^{yz}+2yz+1>0$ when $y>0,z>0$ by checking the derivative. Therefore, we have $|\nabla ^2 g\left( y, z \right)|>0$. As $\frac{\partial^2 g}{\partial y^2}>0$ and $\frac{\partial^2 g}{\partial z^2}>0$, we can obtain that $\nabla ^2 g\left( y, z \right)$ is positive definite with  $y>0$ and $z>0$, which shows the convexity of $g$. From the above analysis, we have $\mathcal{A}$ is a convex set, indicating that $f\left( x, y \right)$ is a convex function.

\section{Proof of Lemma \ref{lemma4}}
\label{appendix3}
We first prove the convexity of $\overline{T}_{k}^{\text{comp},*}\left( f_k, R_k |f_{k0},R_{k0}\right)$. As $T_{k}^{1}\left( f_k, R_k \right)$, $\overline{T}_{k}^{2}\left( f_k, R_k | f_{k0}, R_{k0}\right)$, $\overline{T}_{k}^{3}\left( f_k, R_k | f_{k0}, R_{k0}\right)$ and $T_{k}^{4}\left( f_k, R_k \right)$ are all reciprocal functions of the linear combination of $f_K$ and $R_k$, they are all convex functions. As the point-wise maximum function of two convex functions is still a convex function\cite{cvx}, we can establish the convexity of $\overline{T}_{k}^{\text{comp},*}\left( f_k, R_k |f_{k0},R_{k0}\right)$.

Next, we prove the correctness of inequality \eqref{lemma4_1} by comparing the values of $T_{k}^{i}\left( f_k, R_k \right)$ for $i \in \left[ 1, 2, 3, 4 \right] $. The differences of the four functions can be formulated as
\begin{align}
&T_{k}^{1}\left( f_k, R_k \right) - T_{k}^{2}\left( f_k, R_k \right)\nonumber\\
= &\frac{\left(\rho f_k \!-\! \beta R_k \right) \left[4\left( 1 \!-\!\rho\right) \tau f_k \!-\! {\left(\alpha \!-\!\alpha\rho  \!-\!\beta \right)D_k\! +\! 4\beta\tau R_k }\right] }{\left[ \beta R_k \!+\! \left( 1 \!-\! \rho  \right) f_k\right]\left[  \rho f_k + \left( \alpha - \beta \right) R_k \right] } ,\label{AppendixC_1_1}\\
&T_{k}^{1}\left( f_k, R_k \right) - T_{k}^{3}\left( f_k, R_k \right) \nonumber\\
= & \frac{f_k\left[4\left( 1 \!-\!\rho\right) \tau f_k \!-\! {\left(\alpha \!-\!\alpha\rho  \!-\!\beta \right)D_k\! +\! 4\beta\tau R_k }\right]}{\left[ \beta R_k \!+\! \left( 1 \!-\! \rho  \right) f_k\right]\left( f_k + \alpha R_k\right) },\label{AppendixC_1_2}\\
&T_{k}^{2}\left( f_k, R_k \right) - T_{k}^{3}\left( f_k, R_k \right)\nonumber\\
= & \frac{\alpha R_k\left[4\left( 1 \!-\!\rho\right) \tau f_k \!-\! {\left(\alpha \!-\!\alpha\rho  \!-\!\beta \right)D_k\! +\! 4\beta \tau R_k }\right]}{\left( f_k + \alpha R_k\right)\left[  \rho f_k + \left( \alpha - \beta \right) R_k \right] },\label{AppendixC_1_3}\\
&T_{k}^{3}\left( f_k, R_k \right) - T_{k}^{4}\left( f_k, R_k \right)  \nonumber
\\=& \frac{\alpha R_k\left(4f_k \tau - \alpha D_k \right) }{f_k \left( f_k+\alpha R_k \right) }.\label{AppendixC_1_4}
\end{align}

Based on the above results, we can establish the relationship among the four functions when $\left( f_k, R_k \right)$ falls in different areas, that is
	\begin{small}
		\begin{subequations}\label{AppendixC_2}
\begin{align}
T_{k}^{4}\left( f_k, R_k \right)>T_{k}^{3}\left( f_k, R_k \right)>T_{k}^{1}\left( f_k, R_k \right),\  \left( f_k, R_k \right)\in \mathcal{S}_1, \label{AppendixC_2_1}\\
T_{k}^{4}\left( f_k, R_k \right)>T_{k}^{3}\left( f_k, R_k \right)\geq T_{k}^{2}\left( f_k, R_k \right),\  \left( f_k, R_k \right)\in \mathcal{S}_2, \label{AppendixC_2_2}\\
T_{k}^{1}\left( f_k, R_k \right)\geq T_{k}^{2}\left( f_k, R_k \right)\geq T_{k}^{3}\left( f_k, R_k \right),\  \left( f_k, R_k \right)\in \mathcal{S}_3, \label{AppendixC_2_3}\\
T_{k}^{4}\left( f_k, R_k \right)\geq T_{k}^{3}\left( f_k, R_k \right),\  \left( f_k, R_k \right)\in \mathcal{S}_3, \label{AppendixC_2_4}\\
T_{k}^{2}\left( f_k, R_k \right)\geq T_{k}^{3}\left( f_k, R_k \right)>T_{k}^{4}\left( f_k, R_k \right),\  \left( f_k, R_k \right)\in \mathcal{S}_4. \label{AppendixC_2_5}
\end{align}
		\end{subequations}
\end{small}

With the inequalities in \eqref{AppendixC_2}, it can be proven that, for any $ f_{k}\geq 0$, and $R_{k} \geq 0 $, the following inequalities hold
\begin{subequations}\label{AppendixC_3}
\begin{align}
T_{k}^{\text{comp},*}	\left( f_k, R_k \right)&\leq \max \left\{T_{k}^{1}\left( f_k, R_k \right), T_{k}^{2}\left( f_k, R_k\right)\right\},\label{AppendixC_3_1}\\
T_{k}^{\text{comp},*}	\left( f_k, R_k \right)&\leq T_{k}^{3}\left( f_k, R_k \right), \\
T_{k}^{\text{comp},*}	\left( f_k, R_k \right)&\leq T_{k}^{4}\left( f_k, R_k \right).
\end{align}
\end{subequations}

Therefore, if $\left( f_{k0},R_{k0} \right) \in \mathcal{S}_1 \cup\mathcal{S}_2$, we have
\begin{subequations}\label{AppendixC_6}
	\begin{align}
&\overline{T}_{k}^{\text{comp},*}\left( f_{k}, R_{k} |\ f_{k0},R_{k0}\right)\label{AppendixC_6_1}
\\ = &\max \left\{T_{k}^{1}\left( f_k, R_k \right), \overline{T}_{k}^{2}\left( f_k, R_k | f_{k0}, R_{k0}\right)\right\}\label{AppendixC_6_2}
\\ \geq & \max \left\{T_{k}^{1}\left( f_k, R_k \right), T_{k}^{2}\left( f_k, R_k\right)\right\}\label{AppendixC_6_3}
\\ \geq & T_{k}^{\text{comp},*}\left( f_k, R_k \right),\label{AppendixC_6_4}
	\end{align}
\end{subequations}
where \eqref{AppendixC_6_2} is based on the definition of $\overline{T}_{k}^{\text{comp},*}$ in \eqref{Tk_approx}, \eqref{AppendixC_6_3} follows from \eqref{Tk2_approx1}, and \eqref{AppendixC_6_4} follows from \eqref{AppendixC_3_1}. 

If $\left( f_{k0},R_{k0} \right) \in \mathcal{S}_3$ or $\left( f_{k0},R_{k0} \right) \in \mathcal{S}_4$, it can be proven that $\overline{T}_{k}^{\text{comp},*}\left( f_k, R_k |\ f_{k0},R_{k0}\right)  \geq 	T_{k}^{\text{comp},*}\left( f_k, R_k \right)$ following a similar procedure as \eqref{AppendixC_6}, which demonstrates the correctness of \eqref{lemma4_1}.

Finally, we show the equality condition of \eqref{lemma4_1}, i.e.,
\begin{equation}\label{AppendixC_4}
\overline{T}_{k}^{\text{comp},*}\left( f_{k0}, R_{k0} |\ f_{k0},R_{k0}\right) = T_{k}^{\text{comp},*}	\left( f_{k0},R_{k0} \right).
\end{equation}
If $\left(  f_{k}, R_{k} \right) \in \mathcal{S}_1$, we have 
\begin{subequations}
	\begin{align}
	&\overline{T}_{k}^{\text{comp},*}\left( f_{k0}, R_{k0} |\ f_{k0},R_{k0}\right)\label{AppendixC_4_1}\\
	= &\max \left\{T_{k}^{1}\left(f_{k0}, R_{k0} \right), \overline{T}_{k}^{2}\left( f_{k0}, R_{k0} | f_{k0}, R_{k0}\right)\right\}\label{AppendixC_4_2}\\
	= &\max \left\{T_{k}^{1}\left( f_{k0}, R_{k0} \right), T_{k}^{2}\left( f_{k0}, R_{k0}\right)\right\}\label{AppendixC_4_3}\\
	= &T_{k}^{1}\left( f_{k0}, R_{k0} \right)\label{AppendixC_4_4}\\
	= & T_{k}^{\text{comp},*}\left( f_{k0}, R_{k0} \right),
	\end{align}
\end{subequations}
where \eqref{AppendixC_4_4} follows from \eqref{AppendixC_1_1}. If $\left(  f_{k}, R_{k} \right)$ falls into other areas, the equality can be proven in a similar way, which completes the proof.

\section{Proof of Proposition \ref{prop2}}\label{AppendixD}
Denoting the optimal solution to problem \eqref{P5} in the $i$-th iteration as $\left( \mathbf{p}^i,\mathbf{f}'^i, \mathbf{R}'^i, \mathbf{l}^i, \mathbf{T}^{\text{commu},i} \right) $, we have
\begin{subequations}
	\begin{align}
	&T_{k}^{\text{comp},*}\left( f^i_k, R^i_k \right) +T_{k}^{\text{commu},i}\label{AppendixD_1_1}\\
	\leq \ &\overline{T}_{k}^{\text{comp},*}\left( f^i_k, R^i_k |f^{\left( i-1\right) }_{k},R^{\left( i-1\right) }_{k}\right) +T_{k}^{\text{commu},i}\label{AppendixD_1_2}\\
	\leq \ &T_k\label{AppendixD_1_3}
	\end{align}
\end{subequations}
holds for $k = 1, 2, \cdots, K$, where \eqref{AppendixD_1_2} follows from \eqref{lemma4_1}, and \eqref{AppendixD_1_3} follows from \eqref{P5e}. Therefore, any optimal solution to \eqref{P5} will also satisfy all the constraints in \eqref{P4}, i.e., \eqref{P4b}-\eqref{P4f}, indicating that it is also a feasible solution to \eqref{P4}.

Next, we show the convergence of \textbf{Algorithm \ref{Algo1}}. We have
\begin{subequations}
	\begin{align}
	 &\overline{T}_{k}^{\text{comp},*}\left( f^{\left( i-1\right) }_{k},R^{\left( i-1\right) }_{k} |f^{\left( i-1\right) }_{k},R^{\left( i-1\right) }_{k}\right) +T_{k}^{\text{commu},\left( i-1\right) }\label{AppendixD_2_1}\\
	 =\ &T_{k}^{\text{comp},*}\left( f^{\left( i-1\right) }_{k},R^{\left( i-1\right) }_{k}\right) +T_{k}^{\text{commu},\left( i-1\right)}\label{AppendixD_2_2}\\
	 \leq &\overline{T}_{k}^{\text{comp},*}\left( f^{\left( i-1\right) }_{k},R^{\left( i-1\right) }_{k} |f^{\left( i-2\right) }_{k},R^{\left( i-2\right) }_{k}\right) +T_{k}^{\text{commu},\left( i-1\right)}\label{AppendixD_2_3}\\
	\leq \ &T_k,\label{AppendixD_2_4}
	\end{align}
\end{subequations}
where \eqref{AppendixD_2_3} follows from \eqref{lemma4_1}, and \eqref{AppendixD_2_4} holds becasue $\mathbf{f}'^{\left( i-1\right)}$, $\mathbf{R}'^{\left( i-1\right)}$, and $\mathbf{T}^{\text{commu},\left( i-1\right)}$ are the optimal solution to \eqref{P5} in the $\left( i-1 \right) $-th iteration and should satisfy the constraint \eqref{P5e}. Therefore, it is shown that $\left( \mathbf{p}^{\left( i-1\right) },\mathbf{f}'^{\left( i-1\right) }, \mathbf{R}'^{\left( i-1\right) }, \mathbf{l}^{\left( i-1\right) }, \mathbf{T}^{\text{commu},{\left( i-1\right) }} \right) $ is also feasible to the optimization problem \eqref{P5} in the $i$-th iteration, indicating that $L^{i-1}$ is also an achievable objective function value in the $i$-th iteration. As $\left( \mathbf{p}^i,\mathbf{f}'^i, \mathbf{R}'^i, \mathbf{l}^i, \mathbf{T}^{\text{commu},i} \right) $ minimizes \eqref{P5} in the $i$-th iteration, we have $L^{i}\leq L^{i-1}$ holds for any $i\geq 1$, which completes the proof.




 
\vspace{11pt}
\begin{IEEEbiography}[{\includegraphics[width=1in,height=1.25in,clip,keepaspectratio]{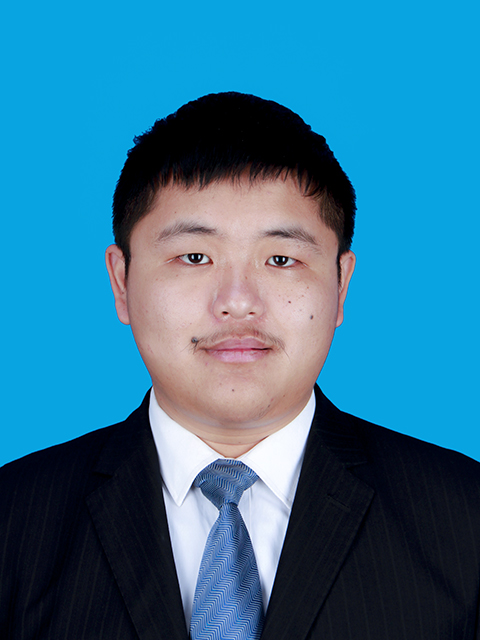}}]{Chengleyang Lei}
	received the B.S. degree in 2021 from the Department of Electronic Engineering, Tsinghua University, Beijing, China. He is currently pursuing the Ph.D degree with the Department of Electronic Engineering in Tsinghua University. His research interests include the communication and control integration, coordinated satellite-UAV-terrestrial networks, and future 6G technologies.
\end{IEEEbiography}

\begin{IEEEbiography}[{\includegraphics[width=1in,height=1.25in,clip,keepaspectratio]{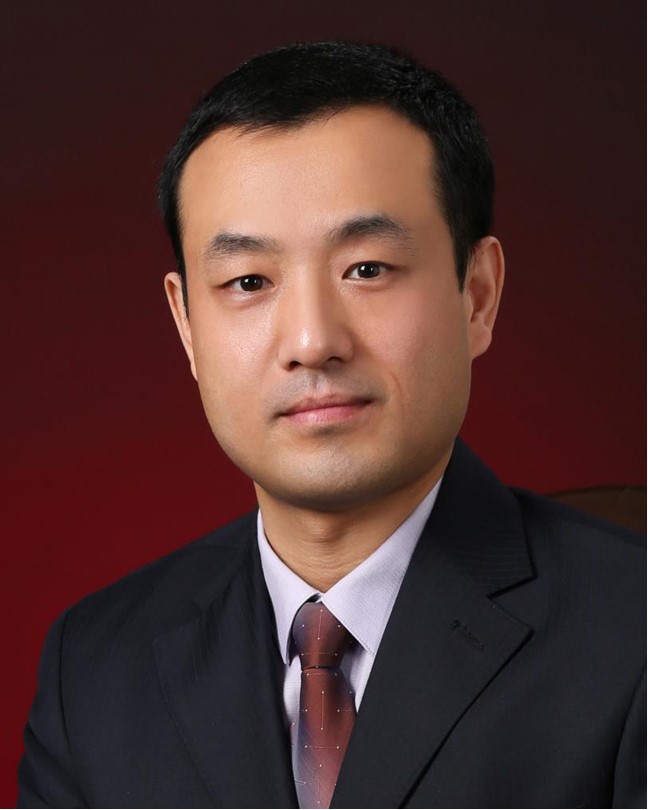}}]{Wei Feng}(Senior Member, IEEE)
	received the B.S. and Ph.D. degrees from the Department of Electronic Engineering, Tsinghua University, Beijing, China, in 2005 and 2010, respectively. 
He is currently a Professor with the Department of Electronic Engineering, Tsinghua University. 
His research interests include maritime communication networks, large-scale distributed antenna systems, and coordinated satellite-UAV-terrestrial networks. 
He serves as the Assistant to the Editor-in-Chief of \textsc{China Communications} and an Editor of \textsc{IEEE Transactions on Cognitive Communications and Networking}.
\end{IEEEbiography}

\begin{IEEEbiography}[{\includegraphics[width=1in,height=1.25in,clip,keepaspectratio]{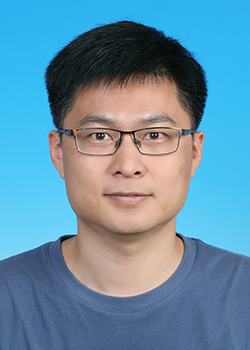}}]{Peng Wei}(Member, IEEE)
received the Ph.D. degree in communication and information systems from the University of Electronic Science and Technology of China in 2017. He is currently an associate Professor with the National Key Laboratory of Wireless Communications, University of Electronic Science and Technology of China. His research interests are in wireless communication, multicarrier system, and signal processing.
\end{IEEEbiography}

\begin{IEEEbiography}[{\includegraphics[width=1in,height=1.25in,clip,keepaspectratio]{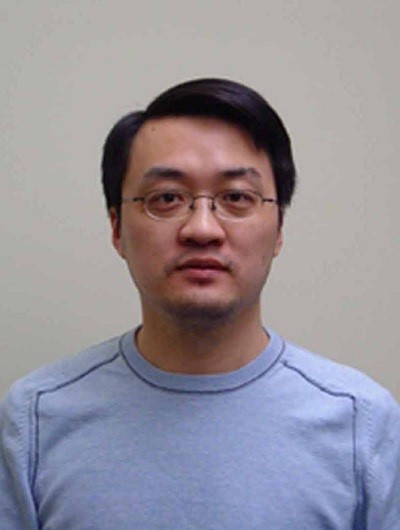}}]{Yunfei Chen}(Senior Member, IEEE)
	received the B.E. and M.E. degrees in electronics engineering from Shanghai Jiaotong University, Shanghai, China, in 1998 and 2001, respectively, and the Ph.D. degree from the University of Alberta in 2006. 
He is currently a Professor with the Department of Engineering, University of Durham, U.K. 
His research interests include wireless communications, cognitive radios, wireless relaying, and energy harvesting.
\end{IEEEbiography}

\begin{IEEEbiography}[{\includegraphics[width=1in,height=1.25in,clip,keepaspectratio]{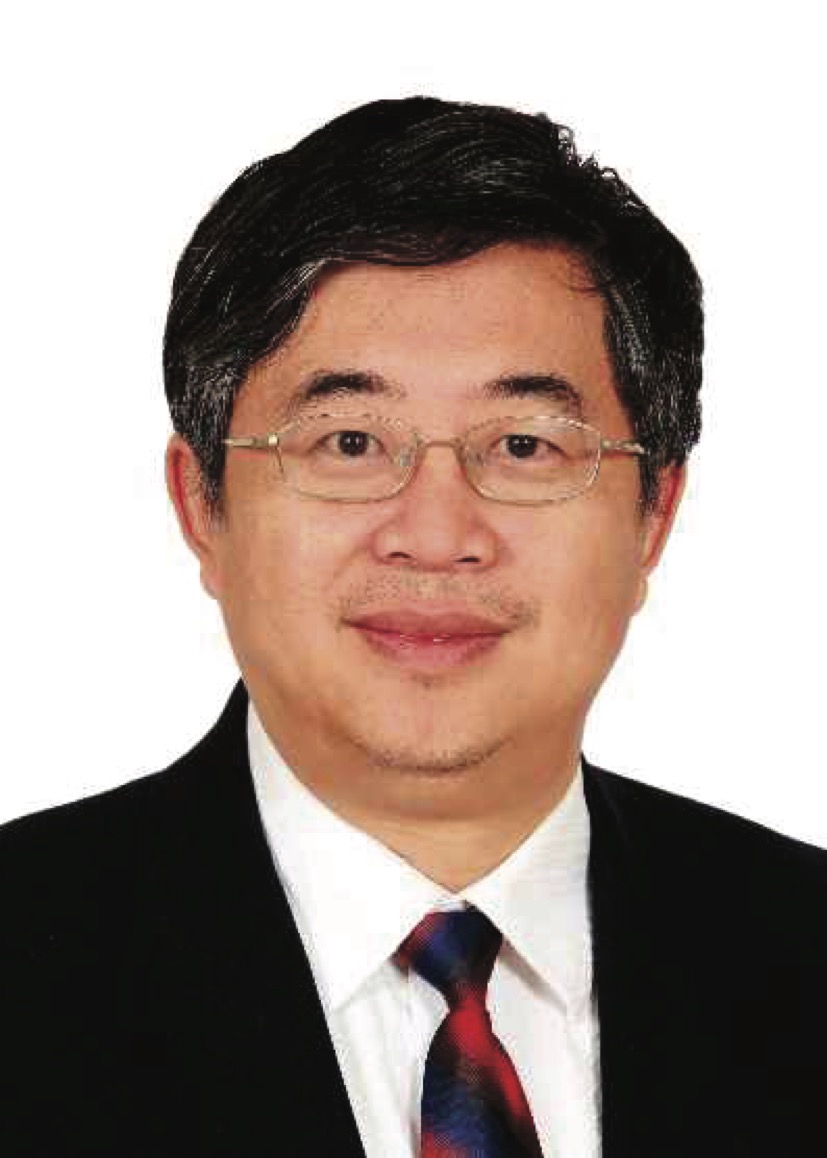}}]{Ning Ge}(Member, IEEE)
	received the B.S. and Ph.D. degrees from Tsinghua University, Beijing, China, in 1993 and 1997, respectively.
From 1998 to 2000, he was with ADC Telecommunications, Dallas, TX, USA, where he researched the development of ATM switch fabric ASIC. 
Since 2000, he has been a Professor with the Department of Electronics Engineering, Tsinghua University. 
He has published over 100 papers. 
His current research interests include communication ASIC design, short range wireless communication, and wireless communications.
Dr. Ge is a senior member of the China Institute of Communications and the Chinese Institute of Electronics.
\end{IEEEbiography}

\begin{IEEEbiography}[{\includegraphics[width=1in,height=1.25in,clip,keepaspectratio]{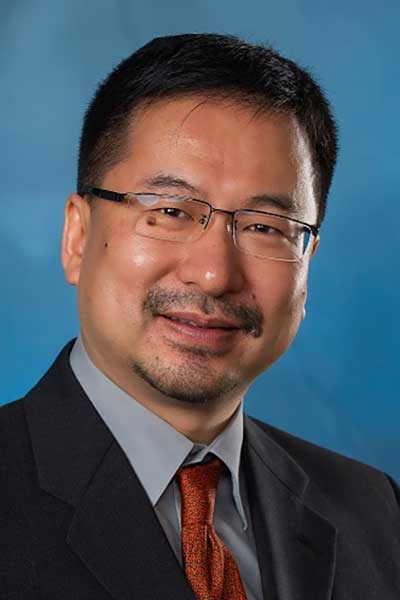}}]{Shiwen Mao}(Fellow, IEEE)
	received his Ph.D. in electrical engineering from Polytechnic University, Brooklyn, NY in 2004.  
Currently, he is a professor and Earle C. Williams Eminent Scholar Chair, and Director of the Wireless Engineering Research and Education Center at Auburn University. 
His research interest includes wireless networks, multimedia communications, and smart grid. 
He is the Editor-in-Chief of \textsc{IEEE Transactions on Cognitive Communications and Networking} and a Distinguished Lecturer of IEEE Communications Society. 
He received the IEEE MMTC Outstanding Researcher Award in 2023, the IEEE TC-CSR Distinguished Technical Achievement Award in 2019 and NSF CAREER Award in 2010. 
He is a co-recipient of the 2022 Best Journal Paper Award of IEEE ComSoc eHealth Technical Committee (TC), the 2021 Best Paper Award of Elsevier/KeAi Digital Communications and Networks Journal, the 2021 IEEE Internet of Things Journal Best Paper Award, the 2021 IEEE Communications Society Outstanding Paper Award, the IEEE Vehicular Technology Society 2020 Jack Neubauer Memorial Award, the 2018 Best Journal Paper Award and the 2017 Best Conference Paper Award from IEEE ComSoc Multimedia TC, and the 2004 IEEE Communications Society Leonard G. Abraham Prize in the Field of Communications Systems. 
He is a co-recipient of the Best Paper Awards from IEEE ICC 2022 and 2013, IEEE GLOBECOM 2023, 2019, 2016, and 2015, and IEEE WCNC 2015, and the Best Demo Awards from IEEE INFOCOM 2022 and IEEE SECON 2017. 
\end{IEEEbiography}

\vspace{11pt}


\vfill

\end{document}